\let\csname equation*\endcsname\relax
\let\csname endequation*\endcsname\relax
\newtheorem*{thresh}{Threshold Theorem}
\newtheorem{prop}{Proposition}
\theoremstyle{definition}
\newtheorem{defn}{Definition}
\newtheorem{ex}{Example}
\begin{document}

\title{
  Bounding quantum gate error rate
  based on reported average fidelity
}


\author{%
    Yuval R. Sanders$^{1,2}$,
    Joel J. Wallman$^{1,3}$ and
    Barry C. Sanders$^{1,4,5,6,7}$
}

\address{$^1$%
    Institute for Quantum Computing,
    University of Waterloo,
    Waterloo, Ontario, Canada N2L 3G1
}

\address{$^2$%
    Department of Physics and Astronomy,
    University of Waterloo,
    Waterloo, Ontario, Canada N2L 3G1
}

\address{$^3$%
    Department of Applied Mathematics,
    University of Waterloo,
    Waterloo, Ontario, Canada N2L 3G1
}

\address{$^4$%
    Institute for Quantum Science and Technology,
    University of Calgary,
    Calgary, Alberta, Canada T2N 1N4
}

\address{$^5$%
    Hefei National Laboratory for Physics at Microscale
    and Department of Modern Physics,
    University of Science and Technology of China,
    Hefei, Anhui, China
}

\address{$^6$%
    Shanghai Branch,
    CAS Center for Excellence and Synergetic Innovation Center
    in Quantum Information and Quantum Physics,
    University of Science and Technology of China,
    Shanghai 201315, China
}

\address{$^7$%
    Program in Quantum Information Science,
    Canadian Institute for Advanced Research,
    Toronto, Ontario M5G 1Z8, Canada
}
    
\ead{ysanders@uwaterloo.ca}

\date{\today}


\begin{abstract}
Remarkable experimental advances in quantum computing are exemplified by recent
announcements of impressive average gate fidelities exceeding 99.9\% for
single-qubit gates and 99\% for two-qubit gates. Although these high numbers
engender optimism that fault-tolerant quantum computing is within reach, the
connection of average gate fidelity with fault-tolerance requirements is not
direct. Here we use reported average gate fidelity to determine an upper bound
on the quantum-gate error rate, which is the appropriate metric for assessing
progress towards fault-tolerant quantum computation, and we demonstrate that
this bound is asymptotically tight for general noise. Although this bound is
unlikely to be saturated by experimental noise, we demonstrate using explicit
examples that the bound indicates a realistic deviation between the true error
rate and the reported average fidelity. We introduce the Pauli distance as a
measure of this deviation, and we show that knowledge of the Pauli distance
enables tighter estimates of the error rate of quantum gates.
\end{abstract}


\pacs{03.67.Lx, 03.67.Pp, 03.67.Ac}

\maketitle


\section{Introduction}
\label{sec:intro}

An ideal quantum computer could outperform any classical computer for certain
computational problems in the sense that resource costs such as time or space
scale better than for the best-known classical algorithm~\cite{NC00}. Famous
examples include the provable quadratic speedup for search~\cite{BBHT99}, the
presumed exponential speedup for the Abelian Hidden Subgroup
Problem~\cite{Sim94, Sho99}, and the possible speedup for stoquastic
Hamiltonians using adiabatic quantum computing~\cite{BDOT08}. If a problem
instance with an $\ell$-bit input is solved within bounded error $\varepsilon$
by an algorithm employing $n$ bits or qubits (space cost) and $\nu$ Boolean or
unitary gates (time cost), the algorithm is considered efficient if $n, \nu \in
\text{poly} (\ell)$~\cite{Cob65} and,
if $\varepsilon$ is treated as an asymptotic variable and not as a constant,
$n, \nu \in \text{polylog} (1/\varepsilon)$~\cite{DN06}.

In practice, preparation, processing and measurement are faulty, but the
Threshold Theorem for Fault-Tolerant Quantum Computation (``Threshold
Theorem'')~\cite{Sho95, AB97, KLZ98, AB08, DMN13} guarantees that a noisy device
can perform scalable fault-tolerant quantum computations under certain
conditions. Specifically, the Threshold Theorem guarantees the existence of a
threshold error rate $\eta_0$ ($0<\eta_0<1$) such that a faulty computer whose
error rate $\eta$ satisfies $\eta < \eta_0 $ can perform universal quantum
computations efficiently, namely with $\text{polylog} (n, \nu)$ additional
overhead. The Threshold Theorem is the key to establishing that faulty quantum
computers can be as efficient as ideal quantum computers. A key drawback of the
Threshold Theorem is that $\eta_0$ is established existentially, not
constructively~\cite{AB08}; consequently, this scalability figure of merit is
elusive in practice. A practical approach to assessing fault-tolerance is to
establish a lower bound $\eta_0^{\rm lb} \leq \eta_0$ by devising a code that
is robust against errors that occur at a rate lower than $\eta_0^{\rm lb}$; the
C4/C6 code, for example, is
known to have a threshold of $\eta_0^{\rm lb} \leq 3\%$~\cite{Kni05}.

Current experimental characterizations of quantum gates do not report $\eta$.
Instead the average gate fidelity~$\varphi$~\cite{Nie02} is the typical figure
of merit for gate performance because it can be reliably and scalably estimated
using a procedure called randomized benchmarking~\cite{EAZ05}. Recent reports of
$\varphi$ exceeding $99.9\%$ for one-qubit gates and $99\%$ for two-qubit
gates~\cite{BKM+14} generate strong optimism about the feasibility of scalable
quantum computing. But despite its experimental convenience, $\varphi$ is not
the correct quantity to assess scalability via the Threshold Theorem.

Our aim is to convert reported~$\varphi$ to an upper bound~$\eta^{\rm ub}$
for~$\eta$. This upper bound provides a sufficient condition for fault-tolerant
quantum computing: errors can be efficiently corrected if
\begin{equation}
  \label{eq:condition}
  \eta \leq \eta^{\rm ub} < \eta_0^{\rm lb} \leq \eta_0.
\end{equation}
Thus, a code-derived $\eta_0^{\rm lb}$ can be used to determine whether the
fidelity-derived $\eta^{\rm ub}$ suffices for scalability. The quantity
$\eta^{\rm ub}$ can therefore be used to assess scalability based upon the
experimentally convenient average fidelity~$\varphi$.

Given the fidelity $\varphi$, the best known upper bound $\eta^{\rm ub}$ is
\begin{equation}
  \eta^{\rm ub} := d\sqrt{(1+d^{-1})(1-\varphi)},
\end{equation}
where $d$ is the dimension of the system being acted on~\cite{BK11,Mag12,WF14}.
This bound is unfortunate because the square-root ensures that superficially 
impressive gate fidelities do not, by themselves, guarantee high-quality gate 
performance. A two-qubit gate with $99\%$ fidelity is, for example, only 
guaranteed to have an error rate below $45\%$. Indeed, we have an explicit 
example (Example~\ref{ex:high_fid_high_error_2}) of a two-qubit gate with 
fidelity $99\%$ but an error rate slightly under $13\%$. Furthermore, we 
demonstrate (Example~\ref{ex:dep_vs_unit}) that assessments of gate performance 
based on fidelity can mislead about the relative importance of different noise 
sources.

Our main claim is that $\eta^{\rm ub}$ is an asymptotically-tight approximation 
to the least upper bound to $\eta$ in terms of $\varphi$ and $d$. The least
upper bound is a function $\eta^{\rm lub} (\varphi, d)$ satisfying the following
two properties:
\begin{enumerate}
  \item for any noise channel acting on a $d$-dimensional system with average 
  fidelity $\varphi$ and error rate $\eta$, $ \eta \leq
  \eta^{\rm lub} (\varphi, d) $; and
  \item $ \eta^{\rm lub} (\varphi, d) \leq f(\varphi, d)$ for any function 
  $f(\varphi, d)$ satisfying the first property.
\end{enumerate}
We show that $\eta^{\rm lub}$ must scale as $\sqrt{1 - \varphi}$ for fixed $d$ 
(Proposition~\ref{prop:fidelity_scaling}) and must scale as $d$ for fixed 
$\varphi$ (Proposition~\ref{prop:dim_scaling}). We conjecture that
$\eta^{\rm ub} = \eta^{\rm lub}$.

We suggest one potentially useful kind of additional information about gate 
performance: a quantity we call the ``Pauli distance'' $\delta^{\rm Pauli}$. 
This quantity is motivated by the fact that Pauli channels with average 
fidelity $\phi$ have an error rate of $\eta^{\rm Pauli} = (1 + d^{-1})
(1-\varphi)$, saturating a lower bound on the error rate in terms of the
fidelity and dimension~\cite{MGE12,WF14}. We show that an arbitrary noise
channel satisfies $\eta \leq \eta^{\rm Pauli} + \delta^{\rm Pauli}$
(Proposition~\ref{prop:Pauli_distance}), so that smaller upper bounds
$\eta^{\rm ub}$ to the error rate $\eta$ of a quantum gate that avoid the 
$\sqrt{1-\varphi}$ scaling can be found if the noise is promised to be nearly 
Pauli.

Our message is not that impressive reported average gate fidelities
fail to demonstrate real progress towards fault-tolerant quantum computing,
but that these reports are \emph{insufficient} to claim that fault tolerance
is now within reach. Our argument is that reported fidelity alone implies only
loose bounds on the quantum gate error rate, and that
tighter bounds on error rate are possible only if performance metrics
other than average fidelity are also considered. In addition to our suggestion
of the Pauli distance as a useful additional figure-of-merit,
an intriguing quantity known as ``unitarity''~\cite{WGHF15,KLDF15,Wal15}
may also prove to be useful for assessing the performance of quantum gates.

Our paper proceeds as follows. We establish the definition of error rate in
Sec.~\ref{sec:errors}. We give a brief review of the average gate fidelity and
its relationship to the error rate in Sec.~\ref{sec:fidelity}. Our asymptotic
tightness result is presented in Sec.~\ref{sec:tightness}, and we introduce the
Pauli distance in Sec.~\ref{sec:Pauli_distance}. We use our results in
Sec.~\ref{sec:discussion} to assess reported progress towards fault-tolerance,
and we conclude in Sec.~\ref{sec:conclusion}.


\section{Error rates and the Threshold Theorem}
\label{sec:errors}

The Threshold Theorem is currently our only rigorous guarantee that
fault-tolerant quantum computing is viable if threshold operating conditions are
met. The threshold operating conditions take two forms: noise must be restricted
to a promised form and measurable errors must occur at a low enough rate. We
first define the error rate of quantum gates by extension from
the error rate of random processes (Sec.~\ref{subsec:gate_error}). We then
explain the Threshold Theorem and its connection
to our definition of error rate and to numerical estimates
of code-specific threshold bounds (Sec.~\ref{subsec:threshold_thm}).


\subsection{The error rate of a quantum logic gate}
\label{subsec:gate_error}

Our definition of the error rate of a quantum logic gate builds naturally on 
the concept of error rate for a random process, that is, a map from input to 
output states. For deterministic processes, we can say that an error has 
occurred if the process produces the `wrong' output. However, no single output 
of a random process can be treated unambiguously as `correct'. We therefore 
define the rate of error for a process by comparing the actual statistics of 
the process to its ideal statistics. 

The statistics for an ideal process is governed by a probability distribution 
$p_{\rm id}$ over the set of possible outputs $X$; the ideal probability of 
output $x\in X$ is $p_{\rm id}(x)$. An error-prone process produces a different 
distribution $p_{\rm ac}$, governing the actual statistics over the set of 
possible outcomes $X$. The total variation distance 
\begin{equation}
  d_{\rm TV} \left( \mu, \nu \right) \equiv
  \frac{1}{2}\sum_{x\in X} \left\vert \mu(x) - \nu(x) \right\vert
\end{equation} 
is a natural measure of the distance between two probability distributions 
$\mu$ and $\nu$ over a set of outcomes $X$. 

The total variation distance $d_{\rm TV} \left( p_{\rm ac}, p_{\rm id} \right)$
can be interpreted as the error rate of the process as follows. We can estimate
$p_{\rm ac}$ by sampling the actual random process $N$ times and counting the
number $n(x)$ of occurrences of each possible output $x$; the fraction $n(x)/N$
approaches $p_{\rm ac}(x)$ as $N \rightarrow \infty$. By altering some fraction
$r$ of the samples so that the number of occurrences of each outcome $x$ becomes
$n^\prime (x)$, we can ensure that $n^\prime(x)/N \approx p_{\rm id}(x)$ rather
than $p_{\rm ac}(x)$. The fraction $r$ is not unique, but the minimum possible
value $r_{\rm min}$ of $r$ must be greater than zero for large $N$ if
$p_{\rm ac} \neq p_{\rm id}$. By Proposition~4.7 of~\cite{LPW09}, $r_{\rm min}
\rightarrow d_{\rm TV} \left( p_{\rm ac}, p_{\rm id} \right)$ as $N \rightarrow
\infty$. Thus, $d_{\rm TV} \left( p_{\rm ac}, p_{\rm id} \right)$ approximates
the fraction of a large sample that must be altered to ensure that the relative
frequencies of each outcome match the ideal distribution $p_{\rm id}$; each
alteration can be interpreted as the correction of an error.

We claim that the total variation distance induces the diamond distance
$d_\diamond$ on the space of quantum channels. Our argument follows from the
work of Fuchs and van de Graaf~\cite{FvdG99}, which shows that the error rate of
quantum states is given by the trace distance between the quantum states, and
the work of Kitaev~\cite{Kit97}, which shows that the diamond norm extends the
trace norm to quantum channels. We begin by defining some terminology.

Quantum logic gates act reversibly on some fixed quantum register.
Ideally, the state of
this register can be represented by a unit vector in a fixed $d$-dimensional
Hilbert space $\mathscr{H}$. The register is typically treated as a collection
of $n$ qubits, in which case $\mathscr{H}$ is canonically isomorphic with the
$n$-fold tensor product of the Hilbert space $\mathscr{Q}\cong\mathbb{C}^2$ of a
single qubit: $\mathscr{H} \cong \mathscr{Q}^{\otimes n}$. In this case, $d =
2^n$.

Whereas ideal register states are represented by unit vectors $\ket{\psi} \in
\mathscr{H}$, realistic states are modelled by density operators $\rho$ on
$\mathscr{H}$. A measurement of a quantum state is described by a positive
operator-valued measure (POVM), which is a set of positive operators
$\left\{ E_\ell \right\}$ acting on $\mathscr{H}$ such that $\sum_\ell E_\ell =
I$, the identity operator. The probability of observing the outcome labelled
$\ell$ is $\Tr \left(E_\ell \rho\right)$. Thus, the actual output
$\rho_{\rm ac}$ of a gate acting on a specified input can be compared to the
ideal output $\rho_{\rm id}$ by measuring with respect to some POVM. The error
rate of this measurement is $d_{\rm TV} \left(p_{\rm ac},p_{\rm id}\right)$,
where $p_{\rm ac} (\ell) = \Tr \left( E_\ell \rho_{\rm ac} \right)$ and
$p_{\rm id} (\ell) = \Tr \left( E_\ell \rho_{\rm id} \right)$. Maximizing
$d_{\rm TV} \left( p_{\rm ac}, p_{\rm id} \right)$ over all possible choices of
measurements yields~\cite{FvdG99}
\begin{equation}
  \label{eq:operator_distance}
  d_{\rm Tr} \left( \rho_{\rm ac}, \rho_{\rm id} \right) :=
  \frac{1}{2} \left\| \rho_{\rm ac} - \rho_{\rm id} \right\|_{\rm Tr},
\end{equation}
where $\left\| A \right\|_{\rm Tr} := \Tr \sqrt{ A^\dagger A }$ for any linear
operator $A$. Thus, the error rate of $\rho_{\rm ac}$ with respect to
$\rho_{\rm id}$ is $d_{\rm Tr} \left( \rho_{\rm ac}, \rho_{\rm id} \right)$.

Now that we have defined the error rate of the output of a quantum logic gate,
we can define the error rate of the gate itself. We first present some
mathematical notation for evaluating the difference between an ideal and real
implementation of a quantum logic gate.

An ideal quantum logic gate, represented by $G$, acts as a unitary operator on
$\mathscr{H}$. Whereas the operation on a pure state can be treated as direct
(i.e.~$\ket{\psi} \mapsto G\ket{\psi}$), the gate can act upon a mixed state
$\rho$. In this instance, the gate performs the action $\rho \mapsto G\rho
G^\dagger$. This action is represented by a quantum channel
$\mathcal{G}_{\rm id}$; explicitly,
\begin{equation}
  \label{eq:ideal_gate}
  \mathcal{G}_{\rm id} (\rho) := G \rho G^\dagger.
\end{equation}
This channel is compared with a non-ideal implementation $\mathcal{G}_{\rm ac}$
that is in general not represented by unitary conjugation but is a completely
positive, trace preserving linear operator on the space of density operators
over $\mathscr{H}$.

We have established in Eq.~(\ref{eq:operator_distance}) that the error rate for
a quantum logic gate acting on a specified input state $\rho$ is given by
$d_{\rm Tr} \left(
  \mathcal{G}_{\rm ac}(\rho) , \mathcal{G}_{\rm id}(\rho)
\right)$. The error rate of $\mathcal{G}_{\rm ac}$ with respect to
$\mathcal{G}_{\rm id}$ involves maximization over inputs. Whereas the error rate
could be defined as $\max_\rho d_{\rm Tr} \left(
  \mathcal{G}_{\rm ac}(\rho) , \mathcal{G}_{\rm id}(\rho)
\right)$, such a definition is undesirable because, in general, the error rate
of $\mathcal{G}_{\rm ac}\otimes\mathds{1}$ (where $\mathds{1}$ acts on some
ancillary space $\mathscr{H}^\prime$) differs from that of
$\mathcal{G}_{\rm ac}$~\cite{Kit97,Wat11}. We therefore amend this definition by
maximizing over inputs and ancillary spaces using a construction called the
diamond norm~\cite{Kit97}:
\begin{equation}
  \left\| \mathcal{A} \right\|_\diamond := 
  \sup_{\mathscr{H}^\prime}
  \sup_{ \rho \in \mathrm{dens} ( \mathscr{H} \otimes \mathscr{H}^\prime ) }
  \left\| \mathcal{A} \otimes \mathds{1} (\rho) \right\|_{\rm Tr},
\end{equation}
where $\mathcal{A}$ is any superoperator over $\mathscr{H}$ and $\mathrm{dens}
(\mathscr{H} \otimes \mathscr{H}^\prime)$ is the set of density operators over
the joint Hilbert space of the original register and some ancilla. We therefore
define
\begin{equation}
  d_\diamond \left( \mathcal{G}_{\rm ac}, \mathcal{G}_{\rm id} \right) :=
  \frac{1}{2} \left\|
    \mathcal{G}_{\rm ac} - \mathcal{G}_{\rm id}
  \right\|_\diamond
\end{equation}
to be the error rate $\eta$ of $\mathcal{G}_{\rm ac}$ with respect to
$\mathcal{G}_{\rm id}$: $\eta = d_\diamond \left(
  \mathcal{G}_{\rm ac}, \mathcal{G}_{\rm id}
\right)$. However, we shall use a
modified but equivalent form of this definition in the remainder of this paper.
Our modification is purely for mathematical convenience. 

\begin{defn}
If $\mathcal{G}_{\rm ac}$ is some implementation of a gate $G$, define
\begin{equation}
  \mathcal{D}_G := \mathcal{G}_{\rm ac}\circ\mathcal{G}_{\rm id}^{-1}
\end{equation}
to be the \emph{discrepancy channel} of $G$, where the channel
$\mathcal{G}_{\rm id}$ defined in Eq.~(\ref{eq:ideal_gate}) is unitary and hence
invertible.
\end{defn}

\begin{defn}
The \emph{error rate} of an implementation of $G$ is given by
\begin{equation}
  \label{eq:error_rate_via_discrepancy}
  \eta = d_\diamond \left( \mathcal{D}_G, \mathds{1} \right),
\end{equation}
where $\mathcal{D}_G$ is the discrepancy channel of the implementation.
\end{defn}


\subsection{The Threshold Theorem}
\label{subsec:threshold_thm}

We now explain the Threshold Theorem. We begin by elaborating on the promised 
form of noise; namely, noise locality. We then identify a statement of the 
theorem that is appropriate for our needs. Finally, we
review commonly quoted estimates of fault-tolerance thresholds.

We elaborate on the assumption of noise locality because it is required for
defining the error rate of logic gates independent of the circuit in which they
are employed. Briefly, a logic circuit is said to experience local noise if the
noise acts separately on individual logic gates. To be precise, recall that a
logic circuit is defined as a directed acyclic graph with nodes labelled by
elements of some set of logic gates, where arrows into a node represent inputs
and arrows out of a node represent outputs~\cite{Sav97}. A quantum logic circuit
can similarly be represented by a directed acyclic graph. The noise of a logic
circuit is local if it can be represented as the composition of noise processes
on individual nodes of the circuit graph.

As noise is assumed to affect each gate independently, we model noise by
replacing the intended unitary gate $G$ by some imperfect implementation
$\mathcal{G}_{\rm ac}$ represented as a quantum channel (i.e.~a completely
positive, trace-preserving linear map on density operators over the state space
of the input register). Such a model is reasonable for an imperfect gate subject
to local noise if the interaction between the register space and its environment
obeys the Born-Markov approximation~\cite{Lin76}. We therefore assign an error
rate $\eta$ to each gate $\mathcal{G}_{\rm ac}$ in a circuit $Q^\prime$, which
simulates $Q$ efficiently and accurately in the presence of local noise if $\eta
< \eta_0$.

The various formulations of the Threshold Theorem are distinguished by
assumptions concerning noise. We prefer to employ the statement of Aharonov and
Ben-Or because of its directness with a minimum of jargon.

\begin{thresh}[\cite{AB08}]
There exists a threshold $\eta_0 > 0$ such that the following holds.
Let $\varepsilon > 0$.
If $Q$ is a quantum circuit
operating on $n$ input qubits for $t$ time steps
using $s$ two- and one-qubit gates,
there exists a quantum circuit $Q^\prime$
with depth, size, and width overheads
which are polylogarithmic in $n$, $s$, $t$, and $1/\varepsilon$
such that, in the presence of local noise of error rate $\eta < \eta_0$,
$Q^\prime$ computes a function
which is within $\varepsilon$ total variation distance
from the function computed by $Q$.
\end{thresh}

This theorem guarantees that a value $\eta_0 > 0$, called the ``threshold'',
exists such that a quantum circuit $Q$ can be efficiently simulated by another
circuit $Q^\prime$ to within an arbitrary error tolerance $\varepsilon > 0$ even
if $Q^\prime$ is subject to ``local noise'' at a rate $\eta < \eta_0$.
Inequivalent statements of `the' Threshold Theorem are inequivalent because they
assume promises about noise that are different from that of noise locality.

There are two important limitations to the utility of the threshold $\eta_0$.
Firstly, surpassing the threshold is sufficient but not necessary for
fault-tolerance: error rates larger than $\eta_0$ could be acceptable if
stronger promises can be made about noise. Conversely, devices subject to noise
that does not satisfy the assumptions of the Threshold Theorem cannot be said to
be fault-tolerant based on a demonstration that error rates fall below
threshold; a stronger threshold $\eta_0^\prime < \eta_0$ could apply. The second
limitation is that the choice of $Q^\prime$ depends in practice upon the
specified quantum-error-correcting code. Based on the choice of code, the
appropriate performance target is $\eta_0^{\rm lb}$, rather than $\eta_0$. As
with the first limitation, the validity of $\eta_0^{\rm lb}$ as a performance
target derives from the validity of the promises made about the noise affecting
real devices.

Whereas some threshold
estimates are obtained through rigorous analysis of the performance of a code in
the presence of noise subject to promises of varying strength, others are
obtained through numerical simulation of performance in the presence of a
parametrized family of noise models. Estimates based on numerical simulation are
more optimistic and are often used as performance targets for experimental
fault-tolerant quantum computing research~\cite{BKM+14,CGC+12}.

Analytic estimates of the threshold can be produced based on details of the
proof of the Threshold Theorem. Aharonov and Ben-Or, for example, can justify an
estimate of $\eta_0^{\rm lb} \approx 10^{-6}$~\cite{AB08} based on their choice
of coding strategy. They report a value of $\eta_0^{\rm lb} \approx
10^{-3}$~\cite{Rei06,AGP08} as being the largest rigorously established value.
Numerical estimates, by contrast, are produced by simulating the behaviour of an
error-correcting code in the presence of a restricted class of noise models,
typically depolarizing~\cite{Kni05,RH07}. The relationship of these estimates
with thresholds of the kind established by the Threshold Theorem is not
clear~\cite{STD05,SCCA06}, but these simulations are nonetheless often seen as
indicative~\cite{FMMC12} of true threshold values. The surface code, in
particular, is often believed to have a threshold of around
$1\%$~\cite{RH07,FMMC12}.

A direct comparison of the above threshold values is not justifiable because
each value makes different assumptions about the behaviour of noise and the
choice of code. Thus, the estimate of $\eta_0^{\rm lb} \approx 1\%$ for the
surface code under depolarizing noise does not make the surface code less
desirable than Knill's C4/C6 code even though
the C4/C6 code could have a threshold as high as $3\%$~\cite{Kni05}
because there are other
practical reasons to prefer the surface code over the C4/C6 code. Similarly,
actual gate performance should not be directly compared with these threshold
values because those gates are certainly subject to noise that is not
well-approximated by the depolarizing noise model. The connection between
numerical simulations and fault-tolerance thresholds is a matter of active
research~\cite{MPGC13, PGH+14}.

Whereas there are important open questions regarding the interpretation of
threshold estimates produced by simulation, the term `threshold' is
unambiguously a reference to an upper bound of the error rate introduced by any
given logic gate in a quantum circuit. The main point of this paper is to
connect these theoretical characterizations of error to the experimentally
convenient average gate fidelity.


\section{Average gate fidelity}
\label{sec:fidelity}

Whereas $\eta_0$, defined by the Threshold Theorem, and $\eta_0^{\rm lb}$,
defined in Eq.~(\ref{eq:condition}) and established by noise models and coding
strategies, are appropriate quantities for analyzing scalability, average gate
fidelity is employed in experimental studies because of its convenience. In this
section, we define average gate fidelity and discuss the connection between
average gate fidelity and error rate, first by reviewing the literature and then
by constructing an example that shows that this connection is problematic:
average gate fidelity and quantum gate error rate are not directly connected.
Instead, only lower and upper bounds to the error rate can be derived from
fidelity; the gap between these bounds is substantial in regimes of interest.

The fidelity of a state $\rho$ to a pure state $\psi$ is $\Tr \left( \ket{\psi}
\! \bra{\psi} \rho \right)$~\cite{EAZ05,MGE11,ECMG14,KdSR+14}. The fidelity  of
the output of the actual gate $\mathcal{G}_{\rm ac}$ to the output of the ideal
gate $\mathcal{G}_{\rm id}$ for a given input state $\ket{\psi}$ is therefore
\begin{align}
  \Tr \left(
    \mathcal{G}_{\rm id} (\ket{\psi} \! \bra{\psi})
    \mathcal{G}_{\rm ac}(\ket{\psi} \! \bra{\psi})
  \right) =
  \bra{\psi} \mathcal{D}_G \left( \ket{\psi} \! \bra{\psi} \right) \ket{\psi}.
\end{align}
Averaging over pure state inputs with respect to the Haar measure then gives the
average gate fidelity
\begin{equation}
  \label{eq:fidelity}
  \varphi :=
  \int \text{d}\mu(\psi) \bra{\psi}
      \mathcal{D}_G\left( \ket{\psi} \! \bra{\psi} \right)
  \ket{\psi}
\end{equation}
where we have used the unitary invariance of the Haar measure and $\mathcal{D}_G
= \mathcal{G}_{\rm ac} \circ \mathcal{G}_{\rm id}^{-1}$. The popular randomized
benchmarking protocol~\cite{EAZ05} produces an estimate of this quantity
averaged over a gate-set, though proposed extensions~\cite{MGJ+12} produce
estimates of the average gate fidelity for individual gates.

The state fidelity can be interpreted as the error rate of a particular
measurement. If we define for each pure state $\ket{\psi}$ the POVM $\left\{
\ket{\psi}\!\bra{\psi}, \mathds{1} - \ket{\psi}\!\bra{\psi} \right\}$, the
outcome of this measurement applied to a state $\rho$ will be
$\ket{\psi}\!\bra{\psi}$ with probability $\bra{\psi}\rho\ket{\psi}$, which is
the state fidelity of $\rho$ with respect to $\ket{\psi}$. Thus, the total
variation distance of the actual statistics $p_{\rm ac}$ from the ideal
statistics $p_{\rm id}$ of this measurement upon the output
$\mathcal{D}_G (\ket{\psi}\!\bra{\psi})$
is the state infidelity of $\mathcal{D}_{G} (\ket{\psi}\!\bra{\psi})$
with respect to $\ket{\psi}\!\bra{\psi}$. 

However, the average gate infidelity $1-\varphi$ cannot be so easily interpreted
as an average error rate (despite the common habit~\cite{BWC+11,CGC+12,BKM+14}),
as the measurement basis is not fixed in the integral (and so the infidelity is
not an average error for a fixed measurement), yet neither is it averaged
independently from the state.

To clarify the relationship between average gate fidelity and error rate, we
consider two noise processes on a single qubit. The first is given by
depolarizing noise
\begin{equation}
  \label{eq:depolarizing_error}
  \mathcal{E}^{\rm dep}_r(\rho) := (1-r)\rho + r I/2,
\end{equation}
where $I$ is the identity operator on $\mathscr{Q}$, and the second is a unitary
error
\begin{equation}
  \label{eq:unitary_error}
  \mathcal{E}^{\rm U}_\theta(\rho) := U\rho U^\dagger,
\end{equation}
where $U$ is some unitary operator on $\mathscr{Q}$ with eigenvalues $e^{\pm i
\theta}$ for $0 \leq \theta \leq \pi$. The average gate fidelity for
depolarizing noise is
\begin{equation}
  \varphi^{\rm dep}(r) = 1 - \frac{r}{2}
\end{equation}
whereas the fidelity of the unitary error is~\cite{Nie02}
\begin{equation}
  \varphi^{\rm U} (\theta) = \frac{1}{3} + \frac{2}{3} \cos^2\theta.
\end{equation}
By contrast,
\begin{equation}
  \eta^{\rm dep}(r) = \frac{3}{4}r
\end{equation}
for depolarizing noise, which follows from the fact that depolarizing noise is
Pauli~\cite{MGE12}, and
\begin{equation}
  \eta^{\rm U}(\theta) = \sin\theta
\end{equation}
for unitary error~\cite{JKP09}. Therefore,
\begin{equation}
  \eta^{\rm dep} = \frac{3}{2}\left(1-\varphi^{\rm dep}\right)
\end{equation}
for depolarizing noise but
\begin{equation}
  \eta^{\rm U} = \sqrt{ \frac{3}{2} \left( 1 - \varphi^{\rm U} \right) }
\end{equation}
for unitary error,
which means that there is no single function $f$
such that $f(\varphi) = \eta$ for every possible noise channel.
We demonstrate this difficulty in the following example.

\begin{ex}
\label{ex:dep_vs_unit}
Consider a single-qubit gate that is subject to the two noise processes of
Eqs.~(\ref{eq:depolarizing_error}) and~(\ref{eq:unitary_error}): depolarizing
and unitary. The depolarizing rate is $r=10^{-3}$, with corresponding fidelity
\begin{equation}
  \varphi^{\rm dep} = 1 - 5.0\times 10^{-4},
\end{equation}
whereas the unitary error has angle $\theta=10^{-2}$, with corresponding
fidelity
\begin{equation}
  \varphi^{\rm U} = 1 - 6.7 \times 10^{-5}.
\end{equation}
The combination
\begin{equation}
  \mathcal{D}_G :=
  \mathcal{E}^{\rm dep}_r \circ \mathcal{E}^{\rm U}_\theta \equiv
  \mathcal{E}^{\rm U}_\theta \circ \mathcal{E}^{\rm dep}_r \equiv
  (1-r) \mathcal{E}^{\rm U}_\theta + r \mathcal{E}^{\rm dep}_{r=1}
\end{equation}
has fidelity
\begin{equation}
  \varphi^{\rm tot} = (1-r)\varphi^{\rm U} + \frac{r}{2} =
  1 - 5.3 \times 10^{-4},
\end{equation}
so the fidelity loss seems to arise mostly from depolarizing noise. Yet the
error rate due to unitary error is
\begin{equation}
  \eta^{\rm U}=10^{-2}
\end{equation}
whereas the error rate due to depolarizing noise is
\begin{equation}
  \eta^{\rm dep} = 7.5 \times 10^{-4}.
\end{equation}
The triangle inequalities imply that the error rate of the combined noise
process is
\begin{equation}
  \eta^{\rm tot} = (1 \pm 0.08) \times 10^{-2},
\end{equation}
so  the unitary error is in fact dominating over depolarizing even though the
fidelity appears to imply the reverse.
\end{ex}

Thus, information beyond fidelity is needed to assess the relative importance of
various noise processes affecting the quantum computing device. Determination of
the Pauli-distance (Sec.~\ref{sec:Pauli_distance}) is one possible approach to
characterizing the influence of different noise sources; extending randomized
benchmarking to estimate unitarity~\cite{WGHF15,KLDF15} is another.

Although no direct connection between average gate fidelity and error rate
exists in general, average gate fidelity is clearly of some worth: if the
fidelity of a quantum logic gate is precisely one, we are certain that the gate
will always perform exactly as intended. More generally,
\begin{equation}
  \label{eq:Magesan_inequalities}
  \eta^{\rm Pauli} := \left( 1 + d^{-1} \right) (1 - \varphi)
  \leq \eta \leq
  d \sqrt{ \left( 1 + d^{-1} \right) ( 1 - \varphi ) }
  =: \eta^{\rm ub},
\end{equation}
where $d$ is the dimension of $\mathscr{H}$. Note that the upper bound can
exceed unity if
\begin{equation}
  \varphi < 1-(d^2+d)^{-1},
\end{equation}
so a fidelity less than $83\%$ for single-qubit gates or less than $95\%$ for
two-qubit gates does not ensure that $\eta < 1$; it is possible that the gate is
performing incorrectly all the time for at least one input. Ensuring that $\eta
< 1$ when $\varphi$ fails to meet this threshold for non-triviality must involve
additional promises about the form of noise.

To illustrate the gap between the above lower and upper bounds, consider a
target fidelity of $99\%$. Then for one, two and three qubits, the above upper
bound gives $25\%, 45\%$ and $85\%$ respectively, whereas the lower bound is
essentially $1\%$. For target fidelities of $99.9\%$, the upper bounds become
$7.75\%$, $14.2\%$ and $26.9\%$ respectively, whereas the lower bound is
approximately $0.1\%$ in each case. Hence, these upper and lower bounds differ
by orders of magnitude in regimes of experimental interest.


\section{Tightness of the upper bound on the error rate}
\label{sec:tightness}

We now prove that the upper bound $\eta^{\rm ub}$ on the error rate is
asymptotically tight with respect to fidelity for fixed dimension and
asymptotically tight with respect to dimension for fixed fidelity. In addition,
we prove by example that this bound cannot be improved by better than a factor
varying as the square-root of dimension. To demonstrate these facts, we first
define the variables and functions about which we make asymptotic statements.

\begin{defn}
\label{defn:lub}
The \emph{least upper bound of error rate with respect to average gate fidelity}
$\eta^{\rm lub} = \eta^{\rm lub} (\varphi, d)$ is the unique function of
$\varphi$ and $d$ that satisfies the following. For any discrepancy channel
$\mathcal{D}_G$ of dimension $d$ with average gate fidelity $\varphi$ and error
rate $\eta$, $\eta^{\rm lub}(\varphi,d)\geq\eta$. Furthermore, suppose that
$\eta^{\rm ub} = \eta^{\rm ub} (\varphi, d)$ is any other function with the same
property. Then $\eta^{\rm lub} (\varphi, d) \leq \eta^{\rm ub} (\varphi, d)$ for
all $\varphi$ and $d$.
\end{defn}

We shall establish the scaling of $\eta^{\rm lub}(\varphi,d)$ as a function of
each variable when the other is fixed. Notationally, we distinguish fixed from
variable quantities as follows. If the dimension $d$ is fixed but $\varphi$ is
variable, we write $\eta^{\rm lub} (\varphi) \vert_d$; if vice versa,
$\eta^{\rm lub} (d) \vert_\varphi$. We use similar notation for $\eta^{\rm ub}$.
We seek to establish the scaling of $\eta^{\rm lub}$ in the limit
$\varphi\rightarrow 1$. To make asymptotic arguments about this scaling, we
define two variables that go to infinity as $\varphi\rightarrow 1$.

\begin{defn}
\label{defn:inverse_error_rate}
Define the \emph{inverse error rate} of a quantum logic gate to be $\zeta :=
\eta^{-1}$, where $\eta$ is the error rate of the gate. Thus, $\zeta \rightarrow
\infty$ as $\eta \rightarrow 0$. We also write
\begin{equation}
  \zeta^{\rm lub} := \left(\eta^{\rm lub}\right)^{-1},
  \zeta^{\rm ub} := \left(\eta^{\rm ub}\right)^{-1},
\end{equation}
which are lower bounds to $\zeta$.
\end{defn}

\begin{defn}
\label{defn:inverse_avg_fid}
Define the \emph{inverse average infidelity} of a quantum logic gate to be
\begin{equation}
  \upsilon := (1-\varphi)^{-1}.
\end{equation}
Thus, $\upsilon \rightarrow \infty$ as $\varphi \rightarrow 1$.
\end{defn}

Thus, we can write $\zeta^{\rm lub} = \zeta^{\rm lub} (\upsilon, d)$ and compare
this function to
\begin{equation}
  \zeta^{\rm ub}(\upsilon, d) =  \frac{\sqrt{\upsilon}}{d\sqrt{1+\frac{1}{d}}}.
\end{equation}
We show that
\begin{align}
  \zeta^{\rm lub} (\upsilon) \vert_d \in \Theta (\sqrt{\upsilon})
\end{align}
and
\begin{equation}
  \zeta^{\rm lub}(d)\vert_\upsilon \in \Theta (d^{-1});
\end{equation}
thus, $\zeta^{\rm ub}$ has optimal scaling with respect to $\phi$ and $d$ when
the the other is fixed. We shall make use of a particular unitary gate, defined
below.

\begin{defn}
\label{defn:c-phase}
Define the \emph{generalized controlled-phase gate} by
\begin{equation}
  \mathcal{G}_{\rm id} (\rho) := U_\theta \rho U_\theta^\dagger;\ 
  U_\theta := \left(
    \begin{matrix}
       1    &    0    & \cdots  &    0    &    0    \\
       0    &    1    & \cdots  &    0    &    0    \\
    \vdots  & \vdots  & \ddots  & \vdots  & \vdots  \\
       0    &    0    & \cdots  &    1    &   0     \\
       0    &    0    & \cdots  &    0    & e^{i\theta}
    \end{matrix}
  \right),
\end{equation}
where $U_\theta$ is expressed in the computational basis.
\end{defn}

\begin{prop}
\label{prop:fidelity_scaling}
For fixed dimension $d$,
\begin{equation}
  \zeta^{\rm lub} (\upsilon) \vert_d \in \Theta(\sqrt{\upsilon}).
\end{equation}
Therefore,
\begin{equation}
  \zeta^{\rm ub} (\upsilon) \vert_d \in
  \Theta \left( \zeta^{\rm lub}(\upsilon) \vert_d \right).
\end{equation}
Furthermore,
\begin{equation}
  \eta^{\rm lub} (\varphi) \vert_d \geq
  \frac{1}{2} (d-1)^{-\frac{1}{2}} \times \eta^{\rm ub} (\varphi) \vert_{d}.
\end{equation}
\end{prop}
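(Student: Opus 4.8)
The plan is to sandwich $\eta^{\rm lub}(\varphi)\vert_d$ between two constant (in $\varphi$) multiples of $\eta^{\rm ub}(\varphi)\vert_d$, with the lower multiple supplied by an explicit channel, namely the generalized controlled-phase gate of Definition~\ref{defn:c-phase}. The upper side is free: Eq.~(\ref{eq:Magesan_inequalities}) already gives $\eta^{\rm lub}(\varphi)\vert_d \le \eta^{\rm ub}(\varphi)\vert_d = d\sqrt{(1+d^{-1})(1-\varphi)}$, equivalently $\zeta^{\rm lub}(\upsilon)\vert_d \ge \zeta^{\rm ub}(\upsilon)\vert_d = \sqrt{\upsilon}/(d\sqrt{1+1/d}) \in \Theta(\sqrt{\upsilon})$, so $\zeta^{\rm lub}(\upsilon)\vert_d \in \Omega(\sqrt{\upsilon})$ and $\eta^{\rm lub}(\varphi)\vert_d \in O(\sqrt{1-\varphi})$. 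Everything then reduces to a matching lower bound on $\eta^{\rm lub}$.

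For that I would take as discrepancy channel $\mathcal{U}_\theta(\rho) = U_\theta\rho U_\theta^\dagger$ with $U_\theta = \operatorname{diag}(1,\dots,1,e^{i\theta})$ (this is realized, e.g., by implementing the identity as $\mathcal{U}_\theta$, or the controlled-phase gate of Definition~\ref{defn:c-phase} as the identity). First compute its average gate fidelity from the standard unitary-channel identity $\varphi = (|\operatorname{Tr} U_\theta|^2 + d)/(d(d+1))$~\cite{Nie02}; since $\operatorname{Tr} U_\theta = (d-1) + e^{i\theta}$ this gives
\begin{equation}
  1 - \varphi(\theta) = \frac{2(d-1)(1-\cos\theta)}{d(d+1)} = \frac{4(d-1)}{d(d+1)}\sin^2\!\tfrac{\theta}{2}.
\end{equation}
Next compute the error rate $\eta(\theta) = \tfrac12\|\mathcal{U}_\theta - \mathds{1}\|_\diamond$. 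By convexity of the trace norm the diamond norm is attained on a pure input $\ket{\psi}$ of the doubled space (with ancilla of dimension at most $d$), and the trace distance of the pure states $(U_\theta\otimes I)\ket{\psi}$ and $\ket{\psi}$ equals $\sqrt{1 - |\bra{\psi}(U_\theta^\dagger\otimes I)\ket{\psi}|^2}$. Since $U_\theta\otimes I$ has only the eigenvalues $1$ and $e^{i\theta}$, writing $q$ for the weight $\ket{\psi}$ places on the $e^{i\theta}$-eigenspace makes the overlap $(1-q)+q e^{i\theta}$, whose modulus is minimized over $q\in[0,1]$ at $q=\tfrac12$, giving $\min|\text{overlap}|^2 = \cos^2(\theta/2)$ and hence $\eta(\theta) = \sin(\theta/2)$ for $\theta\in[0,\pi]$.

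Combining the two displays, $\eta(\theta) = \tfrac12\sqrt{d(d+1)/(d-1)}\,\sqrt{1-\varphi(\theta)}$, which is precisely $\tfrac12(d-1)^{-1/2}$ times $\eta^{\rm ub}(\varphi(\theta))\vert_d = \sqrt{d(d+1)}\,\sqrt{1-\varphi}$. Because $\eta^{\rm lub}(\varphi)\vert_d$ dominates the error rate of every dimension-$d$ discrepancy channel with fidelity $\varphi$ (Definition~\ref{defn:lub}), and $\theta\mapsto\varphi(\theta)$ maps $[0,\pi]$ bijectively onto $[1-4(d-1)/(d(d+1)),1]$ — exactly the range in which $\eta^{\rm ub}(\varphi)\vert_d\le 2\sqrt{d-1}$, so that the claimed bound is non-vacuous — this yields the ``Furthermore'' inequality $\eta^{\rm lub}(\varphi)\vert_d \ge \tfrac12(d-1)^{-1/2}\eta^{\rm ub}(\varphi)\vert_d$, in particular as $\varphi\to1$. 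Hence $\eta^{\rm lub}\in\Omega(\sqrt{1-\varphi})$ for fixed $d$; with the upper side this gives $\zeta^{\rm lub}(\upsilon)\vert_d\in\Theta(\sqrt{\upsilon})$, and taking reciprocals of $\eta^{\rm ub}\ge\eta^{\rm lub}\ge\tfrac12(d-1)^{-1/2}\eta^{\rm ub}$ gives $\zeta^{\rm ub}\le\zeta^{\rm lub}\le 2\sqrt{d-1}\,\zeta^{\rm ub}$, i.e. $\zeta^{\rm ub}(\upsilon)\vert_d\in\Theta(\zeta^{\rm lub}(\upsilon)\vert_d)$.

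The main obstacle is the diamond-norm evaluation rather than the fidelity algebra: one must justify the restriction to pure inputs and to a bounded ancilla, recognize that the optimization is the minimum modulus of a convex combination of the eigenvalues of $U_\theta$ on the unit circle (the distance from the origin to the chord $[1,e^{i\theta}]$), and verify that for $\theta\in[0,\pi]$ the origin lies outside that chord so the distance is $\cos(\theta/2)$ and not $0$. Since $U_\theta$ acts nontrivially only on a two-dimensional subspace, this ultimately recovers the single-qubit unitary-error value $\eta^{\rm U}=\sin\theta$ of~\cite{JKP09}; carrying out that reduction cleanly is the step that needs care.
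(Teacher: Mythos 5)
Your proposal is correct and follows essentially the same route as the paper: the same witness (the generalized controlled-phase gate as discrepancy channel), the same fidelity computation via Nielsen's formula giving $1-\varphi = \tfrac{2(d-1)}{d(d+1)}(1-\cos\theta)$, the same error-rate value $\eta=\sin(\theta/2)$, and the same sandwich $\zeta \geq \zeta^{\rm lub} \geq \zeta^{\rm ub}$ to extract both the $\Theta(\sqrt{\upsilon})$ scaling and the $\tfrac12(d-1)^{-1/2}$ factor. The only difference is that you derive the diamond-norm value from the numerical range of $U_\theta$ rather than citing Theorem~26 of~\cite{JKP09} as the paper does; your derivation is sound.
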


\begin{proof}
Suppose we have an implementation of the generalized controlled-phase gate given
simply by $\mathcal{G}_{\rm ac}(\rho) = \rho$, the identity channel. The average
gate fidelity is~\cite{Nie02}
\begin{equation}
  \label{eq:fidelity_scaling_fid}
  \varphi
   = \frac{d + \left\vert \Tr \left( U_{-\theta} \right) \right\vert^2}{d + d^2}
   = 1 - \frac{2 (d-1)}{d(d+1)} (1 - \cos\theta).
\end{equation}
By Theorem~26 of~\cite{JKP09},
\begin{equation}
  \label{eq:fidelity_scaling_error_rate}
  \eta = \sqrt{ \frac{1 - \cos\theta}{2} };
\end{equation}
hence,
\begin{equation}
  \zeta = \sqrt{\frac{4 (d-1)}{d (d+1)}} \times \sqrt{\upsilon}.
\end{equation}
By contrast,
\begin{equation}
  \zeta^{\rm ub} = \sqrt{\frac{1}{d(d+1)}} \times \sqrt{\upsilon}.
\end{equation}
Furthermore, $\zeta^{\rm lub}$ is defined so that
$\zeta \geq \zeta^{\rm lub} \geq \zeta^{\rm ub}$;
thus,
\begin{equation}
  \label{eq:main_ineq_fid_scaling}
  \sqrt{ \frac{4 (d-1)}{d (d+1)} } \times \sqrt{\upsilon}
  \geq \zeta^{\rm lub} (\upsilon) \vert_d
  \geq \sqrt{\frac{1}{d(d+1)}} \times \sqrt{\upsilon}.
\end{equation}
\end{proof}

\begin{ex}
\label{ex:high_fid_high_error_1}
All single-qubit unitary errors satisfy
\begin{equation}
\eta = \frac{1}{2}\eta^{\rm ub} = \sqrt{\frac{3}{2}(1-\varphi)}.
\end{equation}
If $\mathcal{D}_G (\rho) = U\rho U^\dagger$ for
some $2\times 2$ unitary operator $U$, then the eigenvalues of $U$ can be
written as $e^{\pm i \theta/2}$ for some $\theta$. The diamond distance
$d_\diamond$ and the fidelity are unitarily invariant, so the error rate of
$\mathcal{D}_G$ depends only on $\theta$. Furthermore, $\mathcal{D}_G$ is
equivalent to the generalized controlled-phase gate
(Definition~\ref{defn:c-phase}) and hence $\eta$ satisfies
Eq.~(\ref{eq:fidelity_scaling_error_rate}). Eq.~(\ref{eq:fidelity_scaling_fid})
therefore implies that $\eta^{\rm ub} = 2\eta$.
\end{ex}

\begin{ex}
\label{ex:high_fid_high_error_2}
There exists a two-qubit gate with fidelity $99.0\%$ but error rate $12.9\%$.
Consider the generalized controlled-phase gate (Definition~\ref{defn:c-phase})
acting on two qubits: one target qubit and one control qubit. Setting
$\theta=0.259$, we have $\varphi = 99.0\%$ by
Eq.~(\ref{eq:fidelity_scaling_fid}) and $\eta = 12.9\%$ by
Eq.~(\ref{eq:fidelity_scaling_error_rate}).
\end{ex}

We now demonstrate that the generalized controlled-phase gate example used to
prove Proposition~\ref{prop:fidelity_scaling} does not yield the true value of
$\eta^{\rm lub}$. We prove that $\zeta^{\rm lub}(d) \vert_\varphi \in
\Theta \left(d^{-1}\right)$, whereas the generalized controlled-phase gate has
$\zeta(d) \vert_\varphi \in \Theta\left(d^{-\frac{1}{2}}\right)$ by
Eq.~(\ref{eq:main_ineq_fid_scaling}).

\begin{prop}
\label{prop:dim_scaling}
For fixed fidelity $\varphi$,
\begin{equation}
  \zeta^{\rm lub}(d) \vert_\varphi \in \Theta\left(d^{-1}\right).
\end{equation}
Therefore,
\begin{equation}
  \zeta^{\rm ub}(d) \vert_\varphi \in \Theta\left(
    \zeta^{\rm lub}(d) \vert_\varphi
  \right).
\end{equation}
\end{prop}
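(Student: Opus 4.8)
The plan is to prove the two inclusions $\zeta^{\rm lub}(d)\vert_\varphi \in \Omega(d^{-1})$ and $\zeta^{\rm lub}(d)\vert_\varphi \in O(d^{-1})$ separately, with essentially all of the difficulty in the second. The first is immediate from the defining property of $\eta^{\rm lub}$: Definition~\ref{defn:lub} gives $\eta^{\rm lub} \leq \eta^{\rm ub}$ and hence $\zeta^{\rm lub} \geq \zeta^{\rm ub}$, and with $\upsilon = (1-\varphi)^{-1}$ held fixed
\begin{equation}
  \zeta^{\rm ub}(d)\vert_\varphi = \frac{\sqrt{\upsilon}}{d\sqrt{1+1/d}} \geq \frac{\sqrt{\upsilon}}{\sqrt{2}}\cdot\frac{1}{d}
\end{equation}
for every $d \geq 1$, so $\zeta^{\rm lub}(d)\vert_\varphi = \Omega(d^{-1})$; equivalently, $\eta^{\rm lub}(d)\vert_\varphi = O(d)$.

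For the matching bound I would use that $\eta^{\rm lub}(\varphi,d)$ is the supremum of the error rate over all discrepancy channels of dimension $d$ with average gate fidelity $\varphi$, so it is enough to exhibit, for each $d$, a single discrepancy channel $\mathcal{D}_G^{(d)}$ with average fidelity exactly $\varphi$ and error rate $\eta\big(\mathcal{D}_G^{(d)}\big) \geq c\,d$ for a constant $c = c(\varphi)$; then $\zeta^{\rm lub}(d)\vert_\varphi \leq \zeta\big(\mathcal{D}_G^{(d)}\big) \leq c^{-1}d^{-1}$. The generalized controlled-phase gate of Proposition~\ref{prop:fidelity_scaling} cannot simply be reused: by Eq.~(\ref{eq:main_ineq_fid_scaling}) it attains only $\eta = \Theta\big(\sqrt{d(1-\varphi)}\,\big)$, short of the target by a factor $\sqrt{d}$. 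In fact no unitary discrepancy channel will do. If $\mathcal{D}_G(\rho) = V\rho V^\dagger$ and the eigenphases $\lambda_1,\dots,\lambda_d$ of $V$ are normalised so that $\sum_j\lambda_j \approx 0$, then $1-\varphi = \big(d^2 - |\Tr V|^2\big)/(d^2+d) = \Theta\big(d^{-1}\sum_j\lambda_j^2\big)$, whereas (by the spectral characterisation of the unitary diamond distance underlying Eq.~(\ref{eq:fidelity_scaling_error_rate})) the error rate is $\eta = \sin(\Delta/2)$ with $\Delta = \max_j\lambda_j - \min_j\lambda_j$, and $\Delta^2 \leq 2\sum_j\lambda_j^2$ forces $\eta = O\big(\sqrt{d(1-\varphi)}\,\big)$. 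So $\mathcal{D}_G^{(d)}$ must be genuinely non-unitary, i.e.\ its Choi matrix must be mixed while still staying close to the maximally entangled state.

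Constructing $\mathcal{D}_G^{(d)}$ and lower-bounding its diamond distance from the identity is the crux, and I expect this to be the main obstacle: for a non-unitary channel there is no spectral shortcut, so one must exhibit an explicit (necessarily highly entangled) test input $\ket{\psi}_{AB}$ and estimate $\tfrac12\big\|\big(\mathcal{D}_G^{(d)}\otimes\mathds{1}_B\big)(\ket{\psi}\!\bra{\psi}) - \ket{\psi}\!\bra{\psi}\big\|_{\rm Tr}$ from below, while verifying that $\int\mathrm{d}\mu(\psi)\,\bra{\psi}\mathcal{D}_G^{(d)}(\ket{\psi}\!\bra{\psi})\ket{\psi} = \varphi$. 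A natural family to try is one that is close to the identity on all product inputs but coherently disturbs entanglement --- for instance a channel that (partially) measures membership in a low-dimensional subspace of $\mathscr{H}$ and acts drastically conditioned on the rare outcome --- since such channels can have diamond distance far exceeding what their average infidelity would suggest; the delicate point is tuning the subspace dimension and the strength of the disturbance so that $\eta\big(\mathcal{D}_G^{(d)}\big)/\sqrt{1-\varphi}$ remains of order $d$ uniformly in $d$. Once such a family is in hand, $\zeta^{\rm lub}(d)\vert_\varphi \in \Theta(d^{-1})$, and hence $\zeta^{\rm ub}(d)\vert_\varphi \in \Theta\big(\zeta^{\rm lub}(d)\vert_\varphi\big)$, follows by combining the resulting $O(d^{-1})$ bound with the $\Omega(d^{-1})$ bound above.
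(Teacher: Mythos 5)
Your $\Omega(d^{-1})$ half is correct and is exactly the paper's (it follows immediately from $\zeta^{\rm lub}\geq\zeta^{\rm ub}$), and your observation that no unitary discrepancy channel can witness the matching $O(d^{-1})$ bound --- since unitaries only achieve $\zeta=\Theta(d^{-1/2}\sqrt{\upsilon})$ --- agrees with the remark the paper makes immediately before the proposition. But there is a genuine gap precisely where you flag one: you never construct the witness channel with $\zeta=\Theta(\upsilon/d)$, and without it the $O(d^{-1})$ direction, and hence the proposition, is not proved. You also overestimate the difficulty of this step, anticipating an explicit entangled test state and a hard trace-norm lower bound for a genuinely non-unitary channel.

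The paper's witness is a one-line computation. Take $\mathcal{G}_{\rm id}$ to be the generalized controlled-phase gate of Definition~\ref{defn:c-phase} at $\theta=\pi$, so that $U_\pi$ has a single eigenvalue $-1$ and $\left\|\mathcal{G}_{\rm id}-\mathds{1}\right\|_\diamond=2$ by Theorem~26 of~\cite{JKP09}, and take the implementation $\mathcal{G}_{\rm ac}:=(1-\lambda)\mathcal{G}_{\rm id}+\lambda\mathds{1}$. Then $\mathcal{G}_{\rm ac}-\mathcal{G}_{\rm id}=\lambda\left(\mathds{1}-\mathcal{G}_{\rm id}\right)$, so by homogeneity of the diamond norm $\eta=\lambda$ exactly --- no optimization over entangled inputs is required --- while Nielsen's formula applied to the Kraus decomposition $\left\{\sqrt{1-\lambda}\,I,\sqrt{\lambda}\,U_\pi\right\}$ of the discrepancy channel gives $1-\varphi=\frac{4(d-1)}{d(d+1)}\lambda$, whence $\zeta=\frac{4(d-1)}{d(d+1)}\upsilon\in O(d^{-1}\upsilon)$. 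This channel is in fact of the flavour you guessed (with small probability $\lambda$, apply a drastic phase disturbance supported on a one-dimensional subspace), but realizing it as a convex combination with the ideal gate makes the diamond distance computable by inspection rather than by a bespoke entangled-input estimate. One caveat, present in the paper's own argument as well: the requirement $\eta=\lambda\leq 1$ means the witness exists only for $d=O(\upsilon)$, so the $\Theta(d^{-1})$ claim should be read in the regime where $1-\varphi$ is small compared to $d^{-1}$; your target inequality $\eta\geq c\,d$ is literally unattainable for large $d$ for the same reason.
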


\begin{proof}
We consider a special case of the generalized controlled-phase gate in which
$\theta=\pi$, so the unitary $U_\pi$ has an eigenvalue of $-1$. In this case,
$\|\mathcal{G}_{\rm id} - \mathds{1}\|_\diamond = 2$ by Theorem~26
of~\cite{JKP09}. The implementation we consider is
\begin{equation}
  \mathcal{G}_{\rm ac} := (1-\lambda) \mathcal{G}_{\rm id} + \lambda \mathds{1}.
\end{equation}
The error rate is
\begin{equation}
  \label{eq:dim_scaling_error}
  \eta
  = \frac{1}{2} \left\|
      \mathcal{G}_{\rm ac} - \mathcal{G}_{\rm id}
    \right\|_\diamond
  = \lambda \times \frac{1}{2} \left\|
      \mathcal{G}_{\rm id} - \mathds{1}
    \right\|_\diamond
  = \lambda.
\end{equation}
We calculate the fidelity by applying Nielsen's formula~\cite{Nie02} to the
Kraus decomposition
\begin{equation}
  \left\{ \sqrt{1-\lambda} I, \sqrt{\lambda} U_\pi \right\}
\end{equation}
of the discrepancy channel $\mathcal{D}_{U_\pi}$:
\begin{equation}
  \label{eq:dim_scaling_fid}
  \varphi = \frac{
    d + (1-\lambda) \vert\Tr(I)\vert^2 + \lambda \vert\Tr(U_\pi)\vert^2
  }{d+d^2} = 1 - \frac{4(d-1)}{d(d+1)} \times \lambda.
\end{equation}
Combining Eq.~(\ref{eq:dim_scaling_error}) with Eq.~(\ref{eq:dim_scaling_fid})
yields
\begin{equation}
  \zeta = \frac{4(d-1)}{d(d+1)}\upsilon.
\end{equation}
By definition, $\zeta^{\rm ub} \leq \zeta^{\rm lub} \leq \zeta$, which implies
\begin{equation}
  \frac{1}{d} \sqrt{\frac{\upsilon}{1+\frac{1}{d}}}
  \leq \zeta^{\rm lub} \leq \frac{4(d-1)}{d(d+1)}\upsilon.
\end{equation}
For fixed $\upsilon$, we define the constants $c_1 = 2^{-\frac{1}{2}}$ and $c_2
= 4\upsilon$. Then
\begin{equation}
  \frac{c_1}{d} \leq \zeta^{\rm lub}(d)\vert_\upsilon \leq \frac{c_2}{d},
\end{equation}
hence $\zeta^{\rm lub}(d) \in \Theta(d^{-1})$.
\end{proof}

We have established that $\eta^{\rm ub}$ is asymptotically tight with respect to
fidelity (Proposition~\ref{prop:fidelity_scaling}) and dimension
(Proposition~\ref{prop:dim_scaling}) if the other is fixed. Furthermore, we
showed that $\eta^{\rm ub}$ differs from the tightest possible bound by at most
a factor of $2\sqrt{d-1}$, where $d$ is the dimension of the gate. Although we
conjecture that $\eta^{\rm ub}$ is indeed the tightest possible bound on error
rate of a $d$-dimensional gate based only upon fidelity, important quantitative
statements are true (Examples~\ref{ex:high_fid_high_error_1}
and~\ref{ex:high_fid_high_error_2}) even if our conjecture is false.


\section{New bounds for approximate Pauli channels}
\label{sec:Pauli_distance}

Here we derive improved bounds based on an additional promise about noise.
Specifically, we provide alternative lower and upper bounds on error rate in
terms of gate fidelity and a quantity we call the ``Pauli distance''. We show
that the connection between error rate and gate fidelity is strongly improved if
the Pauli distance of the noise process is known. We give numerical examples for
two important single-qubit noise processes: amplitude damping and unitary error.

The Pauli distance is defined to be the diamond distance between a channel and
its Pauli-twirl. To be precise, we define the single-qubit Pauli operators as
the unitary matrices
\begin{equation}
  I := \left(
    \begin{matrix} 1 & 0 \\ 0 & 1 \end{matrix}
  \right),\ 
  X := \left(
    \begin{matrix} 0 & 1 \\ 1 & 0 \end{matrix}
  \right),\ 
  Y :=  \left(
    \begin{matrix} 0 & -i \\ i & 0 \end{matrix}
  \right), \text{ and }
  Z := \left(
    \begin{matrix} 1 & 0 \\ 0 & -1 \end{matrix}
  \right);
\end{equation}
a multi-qubit Pauli operator is a tensor-product of single-qubit Pauli
operators. A Pauli channel is a quantum channel that has a Kraus representation
in which each Kraus operator is proportional to a Pauli operator.

\begin{defn}
The \emph{Pauli-twirl} of an $n$-qubit channel $\mathcal{E}$ (i.e.~$d=2^n$) is
\begin{equation}
  \mathcal{E}^{\rm PT} (\bullet)
  := \frac{1}{4^n} \sum_{k=1}^{4^n}
    P_k^\dagger\,\mathcal{E}\left(P_k \bullet P_k^\dagger\right) P_k,
\end{equation}
where $P_k$ represents a choice of $n$-qubit Pauli operator.
\end{defn}

\begin{defn}
We define the \emph{Pauli distance} of a gate implementation with discrepancy
channel
$\mathcal{D}_G$ to be
\begin{equation}
  \delta^{\rm Pauli} := d_\diamond \left(\mathcal{D}_G, \mathcal{D}_G^{\rm PT}\right),
\end{equation}
where $\mathcal{D}_G^{\rm PT}$ is the Pauli-twirl of $\mathcal{D}_G$.
\end{defn}

The Pauli-twirl of any channel is a Pauli channel, and the Pauli-twirl of a
Pauli channel is the same channel. For any channel $\mathcal{E}$, $\mathcal{E}$
and $\mathcal{E}^{\rm PT}$ have the same average gate fidelity as the average
gate fidelity is linear and invariant under unitary conjugation. The diamond
distance for channels of a fixed fidelity is minimized by Pauli channels, which
satisfy $\eta^{\rm Pauli} = (1+2^{-n}) (1-\varphi)$ where $n$ is the number of
qubits~\cite{MGE12,WF14}. Several common sources of noise, such as depolarizing
error and dephasing ($T_2$) processes, can be represented by Pauli
channels~\cite{NC00}. Such noise processes have $\delta^{\rm Pauli} = 0$. Other
sources of noise, such as amplitude-damping processes and unitary
errors, cannot. In these cases, $\delta^{\rm Pauli}>0$.

\begin{prop}
\label{prop:Pauli_distance}
The error rate $\eta$ of an $n$-qubit gate with gate fidelity $\varphi$ and
Pauli distance $\delta^{\rm Pauli}$ satisfies
\begin{equation}
\left\vert \delta^{\rm Pauli} - \eta^{\rm Pauli} \right\vert \leq \eta \leq
\delta^{\rm Pauli} +\eta^{\rm Pauli}.
\end{equation}
\end{prop}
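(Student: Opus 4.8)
The plan is to derive both inequalities from the triangle inequality for the diamond distance $d_\diamond$, using the elementary facts already collected above: that $\mathcal{D}_G$ and its Pauli-twirl $\mathcal{D}_G^{\rm PT}$ have the same average gate fidelity $\varphi$, that $\mathcal{D}_G^{\rm PT}$ is a Pauli channel, and that a Pauli channel of fidelity $\varphi$ has error rate exactly $\eta^{\rm Pauli} = (1+2^{-n})(1-\varphi)$. Writing $\eta = d_\diamond(\mathcal{D}_G,\mathds{1})$, $\delta^{\rm Pauli} = d_\diamond(\mathcal{D}_G,\mathcal{D}_G^{\rm PT})$, and observing that $d_\diamond(\mathcal{D}_G^{\rm PT},\mathds{1}) = \eta^{\rm Pauli}$, the three channels $\mathds{1}$, $\mathcal{D}_G$, and $\mathcal{D}_G^{\rm PT}$ sit at the vertices of a ``triangle'' in the metric $d_\diamond$, and the claimed bound is just the statement that the side $\eta$ opposite to $\mathcal{D}_G^{\rm PT}$ lies between the difference and the sum of the other two sides.

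Concretely, I would proceed in three short steps. First, record that $\mathcal{D}_G^{\rm PT}$ is a Pauli channel with the same fidelity $\varphi$ as $\mathcal{D}_G$ (linearity and unitary-conjugation invariance of average gate fidelity, as noted before the proposition), so by the cited result of \cite{MGE12,WF14} its error rate is $d_\diamond(\mathcal{D}_G^{\rm PT},\mathds{1}) = \eta^{\rm Pauli}$. Second, apply the triangle inequality to get the upper bound:
\begin{equation}
  \eta = d_\diamond(\mathcal{D}_G,\mathds{1})
       \leq d_\diamond(\mathcal{D}_G,\mathcal{D}_G^{\rm PT})
           + d_\diamond(\mathcal{D}_G^{\rm PT},\mathds{1})
       = \delta^{\rm Pauli} + \eta^{\rm Pauli}.
\end{equation}
Third, apply the triangle inequality in the other two arrangements,
\begin{equation}
  \delta^{\rm Pauli} \leq \eta + \eta^{\rm Pauli},
  \qquad
  \eta^{\rm Pauli} \leq \eta + \delta^{\rm Pauli},
\end{equation}
which together give $\eta \geq |\delta^{\rm Pauli} - \eta^{\rm Pauli}|$, completing the proof.

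There is essentially no hard step here; the only thing to verify with any care is that $d_\diamond$ genuinely satisfies the triangle inequality as a metric on channels — but this is immediate from the diamond norm being a norm on the space of superoperators (so $\|\mathcal{A}-\mathcal{C}\|_\diamond \leq \|\mathcal{A}-\mathcal{B}\|_\diamond + \|\mathcal{B}-\mathcal{C}\|_\diamond$) together with the homogeneity that gives $d_\diamond = \tfrac12\|\cdot\|_\diamond$. The one subtlety worth a sentence is confirming that the Pauli-twirl preserves average gate fidelity; this follows because $\varphi$ is an affine functional of the channel and is invariant under conjugating the channel by a fixed unitary, so averaging such conjugations over the Pauli group leaves $\varphi$ unchanged. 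I do not anticipate any genuine obstacle: the entire content is the metric (triangle-inequality) structure of $d_\diamond$ plus the known exact value $\eta^{\rm Pauli}$ for Pauli channels.
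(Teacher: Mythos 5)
Your proposal is correct and follows essentially the same route as the paper's proof: both apply the triangle inequality (and its reverse) for the diamond distance to the three channels $\mathds{1}$, $\mathcal{D}_G$, and $\mathcal{D}_G^{\rm PT}$, using the fact that the Pauli-twirl preserves fidelity and that Pauli channels satisfy $d_\diamond(\mathcal{D}_G^{\rm PT},\mathds{1})=\eta^{\rm Pauli}$. Your write-up is in fact slightly more explicit than the paper's about why the twirled channel has error rate exactly $\eta^{\rm Pauli}$, but the argument is identical in substance.
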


\begin{proof}
By the triangle inequality,
\begin{equation} 
\label{eq:triangle_ineq}
 \frac{1}{2}\left\|\mathcal{D}_G
  - \mathds{1}\right\|_\diamond =
  \frac{1}{2}\left\|\mathcal{D}_G - 
  \mathcal{D}_G^{\rm PT} +
  \mathcal{D}_G^{\rm PT} -
  \mathds{1}\right\|_\diamond
  \leq
  \frac{1}{2}\left\|\mathcal{D}_G - 
  \mathcal{D}_G^{\rm PT}\right\|_\diamond +
  \frac{1}{2}\left\|\mathcal{D}_G^{\rm PT} -
  \mathds{1}\right\|_\diamond.
\end{equation}
The left-hand side equals $\eta$ and the right-hand side equals
$\delta^{\rm Pauli} + \eta^{\rm Pauli}$. Similarly, the reverse triangle
inequality implies $\left\vert \delta^{\rm Pauli} - \eta^{\rm Pauli} \right\vert
\leq \eta$.
\end{proof}

Proposition~\ref{prop:Pauli_distance} thus enables bounds to be placed on
possible values of $\eta$ in terms of $\varphi$ and $\delta^{\rm Pauli}$.
Indeed, a variation of this proposition can be applied to noise channels that
have a known structure.

\begin{prop}
\label{prop:sum_of_Pauli_distances}
Suppose $\mathcal{D}_G = \sum_k \mathcal{E}_k$, where each $\mathcal{E}_k$ is
some quantum channel. Let $\delta^{\rm Pauli}_k$ represent the Pauli distance of
$\mathcal{E}_k$. Then the error rate $\eta$ of $\mathcal{D}_G$ satisfies
\begin{equation}
  \eta \leq \eta^{\rm Pauli} + \sum_k \delta^{\rm Pauli}_k.
\end{equation}
\end{prop}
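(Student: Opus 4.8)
The plan is to mimic the proof of Proposition~\ref{prop:Pauli_distance} but apply the triangle inequality more carefully, exploiting the decomposition $\mathcal{D}_G = \sum_k \mathcal{E}_k$ together with linearity of the Pauli-twirl. The key observation is that the Pauli-twirl is a linear map on superoperators, so $\mathcal{D}_G^{\rm PT} = \sum_k \mathcal{E}_k^{\rm PT}$. Moreover, since average gate fidelity is linear in the channel and the Pauli-twirl preserves it, the Pauli-twirl $\mathcal{D}_G^{\rm PT}$ is a Pauli channel with the same fidelity $\varphi$ as $\mathcal{D}_G$, and hence its error rate is exactly $\eta^{\rm Pauli} = (1+2^{-n})(1-\varphi)$ as recorded in Section~\ref{sec:Pauli_distance}.

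First I would write, by the triangle inequality for the diamond norm,
\begin{equation}
  \eta = \tfrac12 \left\| \mathcal{D}_G - \mathds{1} \right\|_\diamond
  \leq \tfrac12 \left\| \mathcal{D}_G - \mathcal{D}_G^{\rm PT} \right\|_\diamond
  + \tfrac12 \left\| \mathcal{D}_G^{\rm PT} - \mathds{1} \right\|_\diamond.
\end{equation}
The second term is $\eta^{\rm Pauli}$ as above. For the first term, I would substitute $\mathcal{D}_G - \mathcal{D}_G^{\rm PT} = \sum_k (\mathcal{E}_k - \mathcal{E}_k^{\rm PT})$ and apply the triangle inequality once more:
\begin{equation}
  \tfrac12 \left\| \sum_k \left(\mathcal{E}_k - \mathcal{E}_k^{\rm PT}\right) \right\|_\diamond
  \leq \sum_k \tfrac12 \left\| \mathcal{E}_k - \mathcal{E}_k^{\rm PT} \right\|_\diamond
  = \sum_k \delta^{\rm Pauli}_k,
\end{equation}
using the definition of $\delta^{\rm Pauli}_k$ as $d_\diamond(\mathcal{E}_k, \mathcal{E}_k^{\rm PT})$. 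Combining the two displays gives $\eta \leq \eta^{\rm Pauli} + \sum_k \delta^{\rm Pauli}_k$.

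The only subtlety worth flagging is the claim that $\mathcal{D}_G^{\rm PT} = \sum_k \mathcal{E}_k^{\rm PT}$; this is immediate from the definition of the Pauli-twirl, which is a finite sum of conjugations applied to its argument and hence linear in that argument. One should also note that the $\mathcal{E}_k$ need not individually be trace-preserving or even completely positive for the argument to go through — only that their sum is the discrepancy channel and that each $\delta^{\rm Pauli}_k$ is well-defined as a diamond distance — though in the intended application (e.g.\ a convex-like decomposition as in Example~\ref{ex:dep_vs_unit}) they are genuine channels or subnormalized channels. I do not anticipate a genuine obstacle here; the proposition is essentially a bookkeeping refinement of Proposition~\ref{prop:Pauli_distance}, with the main content being the linearity of the twirl and the fidelity-invariance that pins the error rate of $\mathcal{D}_G^{\rm PT}$ to exactly $\eta^{\rm Pauli}$.
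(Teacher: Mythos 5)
Your proof is correct and follows essentially the same route as the paper: the triangle inequality to split off $\mathcal{D}_G^{\rm PT}$, the identification of $\tfrac12\left\|\mathcal{D}_G^{\rm PT}-\mathds{1}\right\|_\diamond$ with $\eta^{\rm Pauli}$ via fidelity-invariance of the twirl, and then linearity of the Pauli-twirl plus the triangle inequality on the sum to bound the remaining term by $\sum_k \delta^{\rm Pauli}_k$. The only cosmetic difference is that the paper cites Proposition~\ref{prop:Pauli_distance} for the first step and then separately proves $\delta^{\rm Pauli}\leq\sum_k\delta^{\rm Pauli}_k$, whereas you inline that first step; the content is identical.
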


\begin{proof}
If $\delta^{\rm Pauli}$ is the Pauli distance of $\mathcal{D}_G$,
Proposition~\ref{prop:Pauli_distance} implies that
\begin{equation}
  \eta \leq \eta^{\rm Pauli} + \delta^{\rm Pauli},
\end{equation}
so we only need to show that
\begin{equation}
  \delta^{\rm Pauli} \leq \sum_k\delta^{\rm Pauli}_k.
\end{equation}
As the Pauli-twirl operation on quantum channels  is linear, i.e.
\begin{equation}
  \mathcal{D}_G^{\rm PT} = \left(\sum_k \mathcal{E}_k\right)^{\rm PT}
  = \sum_k \mathcal{E}_k^{\rm PT},
\end{equation}
we apply the triangle inequality repeatedly to obtain
\begin{equation}
  \frac{1}{2}\left\|
    \mathcal{D}_G - \mathcal{D}_G^{\rm PT}
  \right\|_\diamond
  = \frac{1}{2} \left\|
    \left( \sum_k \mathcal{E}_k \right)
    - \left( \sum_k \mathcal{E}_k^{\rm PT} \right)
  \right\|_\diamond
  \leq \sum_k \frac{1}{2} \left\|
    \mathcal{E}_k - \mathcal{E}_k^{\rm PT}
  \right\|_\diamond.
\end{equation}
The left-hand side equals $\delta^{\rm Pauli}$ and the right-hand side equals
$\sum_k \delta^{\rm Pauli}_k$.
\end{proof}

Although Proposition~\ref{prop:sum_of_Pauli_distances} yields weaker bounds than
Proposition~\ref{prop:Pauli_distance} in general, it might be easier in practice
to estimate $\delta^{\rm Pauli}$ for individual sources of noise rather than for
the overall noise process.

\begin{figure}
  \includegraphics{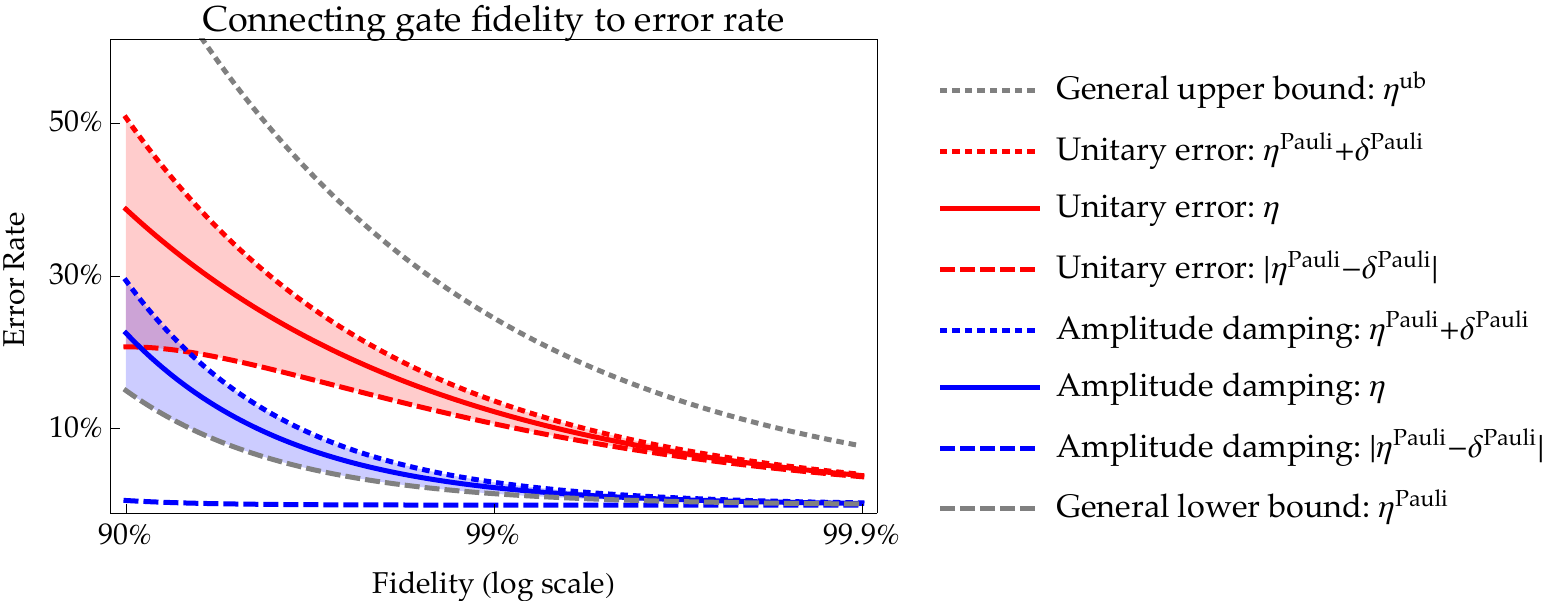}
  \caption{
    An illustration of the dichotomy between average gate fidelity and the error
    rate for single-qubit noise channels. The grey curves illustrate the
    generally applicable lower (dashed) and upper (dotted) bounds.
    The red curves pertain to unitary errors and
    the blue curves pertain to an amplitude damping (``a.d.'') process.
    The solid red/blue curves are the numerical values of the error rate
    (vertical axis) given the average gate fidelity (horizontal axis) of the
    unitary/a.d. model.
    The dotted red/blue curves are the values of the
    Pauli-distance-based upper bound $\eta^{\rm Pauli} + \delta^{\rm Pauli}$
    calculated for the unitary/a.d. model;
    the dashed are values for the lower
    bound $\left\vert\eta^{\rm Pauli} - \delta^{\rm Pauli}\right\vert$. The
    red/blue shading indicates region estimates for error rate
    based upon fidelity and Pauli distance; as $\left\vert\eta^{\rm Pauli} -
    \delta^{\rm Pauli}\right\vert$ falls below $\eta^{\rm Pauli}$ for the
    a.d. process, the blue region uses $\eta^{\rm Pauli}$ rather than
    $\left\vert\eta^{\rm Pauli} - \delta^{\rm Pauli}\right\vert$
    as a lower bound for $\eta$.
    The calculations were performed using the QETLAB project~\cite{QETLAB}.
  }
  \label{fig:Pauli_distance}
\end{figure}

We consider two examples of single-qubit noise processes in which
$\delta^{\rm Pauli}$ is non-zero: unitary error, which is a model of control
error, and an amplitude damping process,
which is a model of thermalization with a
zero-temperature bath. The unitary error can be entirely specified by the
eigenvalues $e^{\pm i \theta}$ of the unitary operator, and the
amplitude damping process
may be specified by a rate parameter $r$. Both $r$ and $\theta$ may be expressed
in terms of the observed average gate fidelity $\varphi$ and thus the error rate
of each can be numerically evaluated as a function of $\varphi$. The results of
this numerical evaluation are displayed in Figure~\ref{fig:Pauli_distance}.

The most important observation about Figure~\ref{fig:Pauli_distance} is that the
Pauli-distance-based bounds on $\eta$ yield excellent estimates of the error
rate of a noise process as fidelity increases. In fact, fidelity indicates
confidence interval if $\delta^{\rm Pauli}$ is considered as an estimate of
$\eta$. Therefore, the Pauli distance can be interpreted as a measure of the
`badness' of noise in the sense that it indicates the size of the gap between
fidelity and error rate.


\section{Assessing progress towards fault-tolerant quantum computing}
\label{sec:discussion}

The Threshold Theorem guarantees the possibility of fault-tolerant quantum
computation in the presence of local errors that occur at a rate $\eta$ below a
threshold value $\eta_0$. Our aim has therefore been to convert gate fidelity
$\varphi$, a commonly reported figure-of-merit for quantum logic operations,
into an upper bound $\eta^{\rm ub}$ that can be compared, in principle, to
$\eta_0$. Of course the noise assumptions underlying the Threshold Theorem could
be either weaker or stronger than reasonable assumptions about the noise of real
devices, but this subtlety is often overlooked: numerical simulations such as
those of Knill~\cite{Kni05} and Raussendorf and Harrington~\cite{RH07} are often
considered to be indicative of a code-specific threshold value $\eta_0^{\rm lb}$
even though both papers are clear that only one well-behaved noise model is
being simulated.

The proper interpretation of these results is, in the words of Knill, as
``evidence that accurate quantum computing is possible for [error rates] as high
as three per cent''. Thus, Knill claims not that $3\%$ is an \emph{estimate} of
$\eta_0^{\rm lb}$ for the C4/C6 code, but that it is an \emph{upper bound}. The
results of Raussendorf and Harrington can be interpreted similarly. As we stated
at the end of Sec.~\ref{subsec:threshold_thm}, the connection between such
simulations and the estimation of threshold values for actual devices is the
subject of ongoing research~\cite{STD05,MPGC13,PGH+14}.

Whatever its actual value, the threshold error rate that is guaranteed to exist
by the Threshold Theorem is often treated as a performance target for research
efforts towards fault-tolerant quantum computing~\cite{BWC+11,CGC+12,BKM+14}.
However, these authors quote the threshold not as a target error rate but as a
target average gate fidelity. As we have shown in this paper,
the error rate of a quantum gate cannot, in general,
be computed as a function of fidelity.
Therefore, the kind of threshold demonstrated to exist
by the Threshold Theorem is not a fidelity threshold.

Of course we have agreed that bounds on the error rate of a quantum gate can be
derived from the average gate fidelity~\cite{WF14,MGE12}. Whereas the lower
bound was already known to be tight, we showed in Sec.~\ref{sec:tightness} that
the upper bound is an asymptotically tight approximation to the tightest
possible upper bound. We also agreed that
the quantum gate error rate can be computed as a function of fidelity
if the noise is guaranteed to be Pauli;
indeed, we showed in Sec.~\ref{sec:Pauli_distance} that this relationship is
approximately true if the noise can be represented by an approximate-Pauli
channel. But noise is demonstrably non-Pauli in
experiments~\cite{HCD+12,CVC+13,CNR+14,OKB+15} so the observed average gate
fidelity is not necessarily indicative of the true error rate. Existing
threshold results do not imply a practical performance target in terms of gate
fidelity.

The wide-spread conflation of average gate fidelity with error rate
has led to assertions that threshold fidelities for Pauli noise
correspond to fidelity targets for general noise.
One group~\cite{CGC+12}, for example, claims
that gate fidelities of
90-99.5\% (``depending on measurement errors'') suffice for fault-tolerant
quantum computation using the surface code. This is only known to be true if the
relevant noise model is Pauli, which it is not.
Another group~\cite{BKM+14} goes further by asserting
that device performance has surpassed the
fault-tolerance threshold for surface-code-based quantum computing. Their stated
threshold value is $99\%$ fidelity, which is derived from simulations of the
code in the presence of depolarizing noise~\cite{Fow13}. Yet depolarizing noise
is Pauli and therefore saturates the lower bound on error rate as a function of
fidelity, and the appendix of~\cite{BKM+14} makes it clear that there are
non-Pauli sources of noise. Even if the quoted threshold value of $1\%$ is
trustworthy, it is a threshold error rate and not a threshold infidelity.

If the threshold error rate is indeed $1\%$, then $\eta^{\rm ub}$ yields
rigorous, but relatively pessimistic, fidelity targets. If $\eta_0^{\rm est}$ is
the error rate to be surpassed, a gate fidelity satisfying
\begin{equation}
  \label{eq:thresh_condition}
  \varphi > 1- \frac{\left(\eta_0^{\rm est}\right)^2}{d(d+1)},
\end{equation}
where $d$ is the dimension of the gate, is required to guarantee an error rate
$\eta$ below $\eta_0^{\rm est}$ without additional information. So if we assume
that $\eta_0^{\rm est}=1\%$, two-qubit gates ($d=4$) need to have a fidelity
greater than $1-5\times 10^{-6} = 99.9995\%$ to ensure that the error rate falls
below $1\%$. It is of course possible that lower fidelities suffice, but such a
claim must be defended with information such as the Pauli distance
(Sec.~\ref{sec:Pauli_distance}) or unitarity~\cite{WGHF15} about gate
performance additional to fidelity; the main point of our paper is that such
additional information is \emph{required}.


\section{Conclusion}
\label{sec:conclusion}

Reports of extremely high average gate fidelities engender optimism that current
technology is near the threshold required for fault-tolerant quantum
computation. Yet, although the average gate fidelity $\varphi$ is an
experimentally convenient figure of merit, it is not the proper metric, i.e.~the
worst-case quantum gate error rate $\eta$, for assessing progress towards
fault-tolerance.

We have shown that $\eta^{ub} = \sqrt{d(d+1)(1-\varphi)}$ is an asymptotically
tight estimate of the tightest possible upper bound to $\eta$ in terms of
$\varphi$ alone, and we conjecture that this is optimal. We have demonstrated
that it is possible for a two-qubit gate with $99\%$ fidelity to have an error
rate of nearly $13\%$, and we have demonstrated that fidelity-based assessments
of gate performance underestimate especially important noise sources such as
unitary error. We have derived an alternative bound that can yield tighter
estimates of gate performance if an additional piece of information we call the
``Pauli distance'' of the noise channel is known, though other kinds of
information can also be used to derive alternative bounds
\cite{WGHF15,KLDF15,Wal15}.

We have given a sobering assessment of reported progress towards fault-tolerant
quantum computation by converting reported average gate fidelity to the
worst-case quantum gate-error rate. Based on the best theoretical results
currently available, we have shown that two-qubit gates must surpass $99.9995\%$
gate fidelity to ensure that gates experience an error rate lower than $1\%$.
We have used the Pauli distance to show that information additional to fidelity
can be employed to justify tighter bounds on gate performance,
and we argue that future attempts to verify quantum gate performance
should include estimates of figures of merit additional to fidelity
in order to circumvent the looseness of fidelity-based bounds.

\subsection*{Acknowledgements}
We appreciate valuable discussions with
R.~Blume-Kohout,
K.~Brown,
J.~Emerson,
A.~Fowler,
D.~Gottesman,
C.~Granade,
P.~Groszkowski,
R.~Laflamme,
M.~Mosca, and
J.~Watrous.
YRS acknowledges financial support from
the Office of the Director of National Intelligence (ODNI),
Intelligence Advanced Research Projects Activity (IARPA),
through the U.S. Army Research Office.
JJW acknowledges financial support from
the U.S. Army Research Office through grant W911NF-14-1-0103.
BCS acknowledges financial support from
the Natural Sciences and Engineering Research Council of Canada,
Alberta Innovates Technology Futures, and
China's 1000 Talent Plan.
All statements of fact, opinion, or conclusions contained herein
are those of the authors and should not be construed
as representing the official views or policies of
IARPA, the ODNI, or the U.S. Government.


\section*{References}
\bibliography{gateerror}

\providecommand{\newblock}{}
\begin{thebibliography}{10}
\expandafter\ifx\csname url\endcsname\relax
  \def\url#1{{\tt #1}}\fi
\expandafter\ifx\csname urlprefix\endcsname\relax\def\urlprefix{URL }\fi
\providecommand{\eprint}[2][]{\url{#2}}

\bibitem{NC00}
Nielsen M~A and Chuang I~L 2000 {\em {Quantum Computation and Quantum
  Information}\/} (Cambridge: Cambridge University Press)

\bibitem{BBHT99}
Boyer M, Brassard G, H{\o}yer P and Tapp A 1998 {\em Fortschritte der Physik\/}
  {\bf 46} 493--505
  \urlprefix\url{http://dx.doi.org/10.1002/(SICI)1521-3978(199806)46:4/5<493::AID-PROP493>3.0.CO;2-P}

\bibitem{Sim94}
Simon D~R 1994 On the power of quantum computation {\em IEEE 35th Annual
  Symposium on Foundations of Computer Science,\/} pp 116--123

\bibitem{Sho99}
Shor P 1999 {\em SIAM Review\/} {\bf 41} 303--332
  \urlprefix\url{http://epubs.siam.org/doi/abs/10.1137/S0036144598347011}

\bibitem{BDOT08}
Bravyi S, Divincenzo D~P, Oliveira R and Terhal B~M 2008 {\em Quantum Inf.
  Comput.\/} {\bf 8} 361--385

\bibitem{Cob65}
Cobham A 1965 The intrinsic computational difficulty of functions {\em
  Proceedings of the 1964 Congress for Logic, Methodology, and the Philosophy
  of Science\/} pp 24--30

\bibitem{DN06}
Dawson C~M and Nielsen M~A 2006 {\em Quantum Inf. Comput.\/} {\bf 6} 81--95
  \urlprefix\url{http://dl.acm.org/citation.cfm?id=2011679.2011685}

\bibitem{Sho95}
Shor P~W 1995 {\em Phys. Rev. A\/} {\bf 52}(4) R2493--R2496
  \urlprefix\url{http://link.aps.org/doi/10.1103/PhysRevA.52.R2493}

\bibitem{AB97}
Aharonov D and Ben-Or M 1997 Fault-tolerant quantum computation with constant
  error {\em Proceedings of the 29th Annual ACM Symposium on the Theory of
  Computing\/} pp 176--188
  \urlprefix\url{http://doi.acm.org/10.1145/258533.258579}

\bibitem{KLZ98}
Knill E, Laflamme R and Zurek W~H 1998 {\em Science\/} {\bf 279} 342--345
  \urlprefix\url{http://www.sciencemag.org/content/279/5349/342.abstract}

\bibitem{AB08}
Aharonov D and Ben-Or M 2008 {\em SIAM J. Comput.\/} {\bf 38} 1207--1282
  \urlprefix\url{http://epubs.siam.org/doi/abs/10.1137/S0097539799359385}

\bibitem{DMN13}
Devitt S~J, Munro W~J and Nemoto K 2013 {\em Rep. Prog. Phys.\/} {\bf 76}
  076001 \urlprefix\url{http://stacks.iop.org/0034-4885/76/i=7/a=076001}

\bibitem{Kni05}
Knill E 2005 {\em Nature (London)\/} {\bf 434} 39--44
  \urlprefix\url{http://dx.doi.org/10.1038/nature03350}

\bibitem{Nie02}
Nielsen M~A 2002 {\em Phys. Lett. A\/} {\bf 303} 249 -- 252
  \urlprefix\url{http://www.sciencedirect.com/science/article/pii/S0375960102012720}

\bibitem{EAZ05}
Emerson J, Alicki R and \.{Z}yczkowski K 2005 {\em J. Opt. B\/} {\bf 7} S347
  \urlprefix\url{http://stacks.iop.org/1464-4266/7/i=10/a=021}

\bibitem{BKM+14}
Barends R, Kelly J, Megrant A, Veitia A, Sank D, Jeffrey E, White T~C, Mutus J,
  Fowler A~G, Campbell B, Chen Y, Chen Z, Chiaro B, Dunsworth A, Neill C,
  O/'Malley P, Roushan P, Vainsencher A, Wenner J, Korotkov A~N, Cleland A~N
  and Martinis J~M 2014 {\em Nature (London)\/} {\bf 508} 500--503
  \urlprefix\url{http: //dx.doi.org/10.1038/nature13171}

\bibitem{BK11}
Beigi S and K\"{o}nig R 2011 {\em New J. Phys.\/} {\bf 13} 093036
  \urlprefix\url{http://stacks.iop.org/1367-2630/13/i=9/a=093036}

\bibitem{Mag12}
Magesan E 2012 {\em Characterizing Noise in Quantum Systems\/} Ph.D. thesis
  University of Waterloo Waterloo, Ontario, Canada
  \urlprefix\url{https://uwspace.uwaterloo.ca/bitstream/handle/10012/6832/Magesan_Easwar.pdf}

\bibitem{WF14}
Wallman J~J and Flammia S~T 2014 {\em New J. Phys.\/} {\bf 16} 103032
  \urlprefix\url{http://stacks.iop.org/1367-2630/16/i=10/a=103032}

\bibitem{MGE12}
Magesan E, Gambetta J~M and Emerson J 2012 {\em Phys. Rev. A\/} {\bf 85}(4)
  042311 \urlprefix\url{http://link.aps.org/doi/10.1103/PhysRevA.85.042311}

\bibitem{WGHF15}
Wallman J, Granade C, Harper R and Flammia S 2015 Estimating the coherence of
  noise arXiv:1503.07865 \urlprefix\url{http://arxiv.org/abs/1503.07865}

\bibitem{KLDF15}
Kueng R, Long D, Doherty A and Flammia S 2015 Comparing experiments to the
  fault-tolerance threshold arXiv:1510.05653
  \urlprefix\url{http://arxiv.org/abs/1510.05653}

\bibitem{Wal15}
Wallman J 2015 Bounding experimental quantum error rates relative to
  fault-tolerant thresholds arXiv:1511.00727
  \urlprefix\url{http://arxiv.org/abs/1511.00727}

\bibitem{LPW09}
Levin D~A, Peres Y and Wilmer E~L 2009 {\em Markov Chains and Mixing Times\/}
  (Providence: American Mathematical Society)

\bibitem{FvdG99}
Fuchs C and van~de Graaf J 1999 {\em IEEE Trans. Inf. Theory\/} {\bf 45}
  1216--1227 \urlprefix\url{http: //dx.doi.org/10.1109/18.761271}

\bibitem{Kit97}
Kitaev A~Y 1997 {\em Russ. Math. Surv.\/} {\bf 52} 1191
  \urlprefix\url{http://stacks.iop.org/0036-0279/52/i=6/a=R02}

\bibitem{Wat11}
Watrous J 2011 Channel distinguishability and the completely bounded trace norm
  {L}ecture Notes
  \urlprefix\url{https://cs.uwaterloo.ca/~watrous/CS766/LectureNotes/20.pdf}

\bibitem{Sav97}
Savage J~E 1997 {\em Models of Computation: Exploring the Power of Computing\/}
  1st ed (Boston, MA, USA: Addison-Wesley Longman Publishing Co., Inc.) ISBN
  0201895390

\bibitem{Lin76}
Lindblad G 1976 {\em Commun. Math. Phys.\/} {\bf 48} 119--130
  \urlprefix\url{http://dx.doi.org/10.1007/BF01608499}

\bibitem{CGC+12}
Chow J~M, Gambetta J~M, C\'orcoles A~D, Merkel S~T, Smolin J~A, Rigetti C,
  Poletto S, Keefe G~A, Rothwell M~B, Rozen J~R, Ketchen M~B and Steffen M 2012
  {\em Phys. Rev. Lett.\/} {\bf 109}(6) 060501
  \urlprefix\url{http://link.aps.org/doi/10.1103/PhysRevLett.109.060501}

\bibitem{Rei06}
Reichardt B 2006 {\em {Error-detection-based quantum fault tolerance against
  discrete Pauli noise}\/} Ph.D. thesis University of California, Berkely
  Berkeley, California, United States of America
  \urlprefix\url{http://arxiv.org/abs/quant-ph/0612004}

\bibitem{AGP08}
Aliferis P, Gottesman D and Preskill J 2008 {\em Quant. Inf. Comput.\/} {\bf 8}
  181--244

\bibitem{RH07}
Raussendorf R and Harrington J 2007 {\em Phys. Rev. Lett.\/} {\bf 98}(19)
  190504 \urlprefix\url{http://link.aps.org/doi/10.1103/PhysRevLett.98.190504}

\bibitem{STD05}
Svore K~M, Terhal B~M and DiVincenzo D~P 2005 {\em Phys. Rev. A\/} {\bf 72}(2)
  022317 \urlprefix\url{http://link.aps.org/doi/10.1103/PhysRevA.72.022317}

\bibitem{SCCA06}
Svore K~M, Cross A~W, Chuang I~L and Aho A~V 2006 {\em Quantum Information \&
  Computation\/} {\bf 6} 193--212
  \urlprefix\url{http://dl.acm.org/citation.cfm?id=2011687}

\bibitem{FMMC12}
Fowler A~G, Mariantoni M, Martinis J~M and Cleland A~N 2012 {\em Phys. Rev.
  A\/} {\bf 86}(3) 032324
  \urlprefix\url{http://link.aps.org/doi/10.1103/PhysRevA.86.032324}

\bibitem{MPGC13}
Magesan E, Puzzuoli D, Granade C~E and Cory D~G 2013 {\em Phys. Rev. A\/} {\bf
  87}(1) 012324
  \urlprefix\url{http://link.aps.org/doi/10.1103/PhysRevA.87.012324}

\bibitem{PGH+14}
Puzzuoli D, Granade C, Haas H, Criger B, Magesan E and Cory D~G 2014 {\em Phys.
  Rev. A\/} {\bf 89}(2) 022306
  \urlprefix\url{http://link.aps.org/doi/10.1103/PhysRevA.89.022306}

\bibitem{MGE11}
Magesan E, Gambetta J~M and Emerson J 2011 {\em Phys. Rev. Lett.\/} {\bf
  106}(18) 180504
  \urlprefix\url{http://link.aps.org/doi/10.1103/PhysRevLett.106.180504}

\bibitem{ECMG14}
Epstein J~M, Cross A~W, Magesan E and Gambetta J~M 2014 {\em Phys. Rev. A\/}
  {\bf 89}(6) 062321
  \urlprefix\url{http://link.aps.org/doi/10.1103/PhysRevA.89.062321}

\bibitem{KdSR+14}
Kimmel S, da~Silva M~P, Ryan C~A, Johnson B~R and Ohki T 2014 {\em Phys. Rev.
  X\/} {\bf 4}(1) 011050
  \urlprefix\url{http://link.aps.org/doi/10.1103/PhysRevX.4.011050}

\bibitem{MGJ+12}
Magesan E, Gambetta J~M, Johnson B~R, Ryan C~A, Chow J~M, Merkel S~T, da~Silva
  M~P, Keefe G~A, Rothwell M~B, Ohki T~A, Ketchen M~B and Steffen M 2012 {\em
  Phys. Rev. Lett.\/} {\bf 109}(8) 080505
  \urlprefix\url{http://link.aps.org/doi/10.1103/PhysRevLett.109.080505}

\bibitem{BWC+11}
Brown K~R, Wilson A~C, Colombe Y, Ospelkaus C, Meier A~M, Knill E, Leibfried D
  and Wineland D~J 2011 {\em Phys. Rev. A\/} {\bf 84}(3) 030303
  \urlprefix\url{http://link.aps.org/doi/10.1103/PhysRevA.84.030303}

\bibitem{JKP09}
Johnston N, Kribs D~W and Paulsen V~I 2009 {\em Quantum Inf. Comput.\/} {\bf 9}
  16--35

\bibitem{QETLAB}
Johnston N 2015 {QETLAB}: A {MATLAB} toolbox for quantum entanglement, version
  0.8 \url{http://qetlab.com}

\bibitem{HCD+12}
Hayes D, Clark S~M, Debnath S, Hucul D, Inlek I~V, Lee K~W, Quraishi Q and
  Monroe C 2012 {\em Phys. Rev. Lett.\/} {\bf 109}(2) 020503
  \urlprefix\url{http://link.aps.org/doi/10.1103/PhysRevLett.109.020503}

\bibitem{CVC+13}
Chang J~B, Vissers M~R, C{\'o}rcoles A~D, Sandberg M, Gao J, Abraham D~W, Chow
  J~M, Gambetta J~M, Beth~Rothwell M, Keefe G~A, Steffen M and Pappas D~P 2013
  {\em Applied Physics Letters\/} {\bf 103} 012602
  \urlprefix\url{http://scitation.aip.org/content/aip/journal/apl/103/1/10.1063/1.4813269}

\bibitem{CNR+14}
Chen Y, Neill C, Roushan P, Leung N, Fang M, Barends R, Kelly J, Campbell B,
  Chen Z, Chiaro B, Dunsworth A, Jeffrey E, Megrant A, Mutus J~Y, O'Malley
  P~J~J, Quintana C~M, Sank D, Vainsencher A, Wenner J, White T~C, Geller M~R,
  Cleland A~N and Martinis J~M 2014 {\em Phys. Rev. Lett.\/} {\bf 113}(22)
  220502 \urlprefix\url{http://link.aps.org/doi/10.1103/PhysRevLett.113.220502}

\bibitem{OKB+15}
O'Malley P~J~J, Kelly J, Barends R, Campbell B, Chen Y, Chen Z, Chiaro B,
  Dunsworth A, Fowler A~G, Hoi I~C, Jeffrey E, Megrant A, Mutus J, Neill C,
  Quintana C, Roushan P, Sank D, Vainsencher A, Wenner J, White T~C, Korotkov
  A~N, Cleland A~N and Martinis J~M 2015 {\em Phys. Rev. Applied\/} {\bf 3}(4)
  044009
  \urlprefix\url{http://link.aps.org/doi/10.1103/PhysRevApplied.3.044009}

\bibitem{Fow13}
Fowler A~G 2013 Optimal complexity correction of correlated errors in the
  surface code arXiv:1310.0863

\end{thebibliography}

\end{document}